\pgfplotsset{compat=1.13}
\theoremstyle{plain}
\newtheorem{theorem}{Theorem}
\newtheorem{corollary}[theorem]{Corollary}
\newtheorem{lemma}[theorem]{Lemma}
\newtheorem{proposition}[theorem]{Proposition}
\theoremstyle{definition}
\newtheorem{definition}[theorem]{Definition}
\theoremstyle{remark}
\newtheorem{remark}[theorem]{Remark}
\DeclareMathOperator{\diag}{diag}
\DeclareMathOperator{\im}{Im}
\DeclareMathOperator{\T}{\mathsf{T}}
\newcommand{\C}{\mathbb{C}}
\newcommand{\F}{\mathbb{F}}
\newcommand{\R}{\mathbb{R}}
\newcommand{\bE}{\bm{E}}
\newcommand{\bI}{\bm{I}}
\newcommand{\bV}{\bm{V}}
\newcommand{\cG}{\mathcal{G}}
\newcommand{\cI}{\mathcal{I}}
\newcommand{\cS}{\mathcal{S}}
\newcommand{\cX}{\mathcal{X}}
\newcommand{\hD}{\widehat{D}}
\newcommand{\hE}{\widehat{E}}
\newcommand{\hGamma}{\widehat{\Gamma}}
\newcommand{\hL}{\widehat{L}}
\newcommand{\hM}{\widehat{M}}
\newcommand{\hV}{\widehat{V}}
\newcommand{\hbE}{\widehat{\bE}}
\newcommand{\hbV}{\widehat{\bV}}
\newcommand{\hcG}{\widehat{\cG}}
\newcommand{\hdelta}{\widehat{\delta}}
\newcommand{\hf}{\widehat{f}}
\newcommand{\hn}{\widehat{n}}
\newcommand{\htheta}{\widehat{\theta}}
\newcommand{\ocG}{\overline{\cG}}
\newcommand{\on}{\overline{n}}
\newcommand{\imag}{\bm{\imath}}
\newcommand{\one}{\mathds{1}}
\newcommand{\LD}{L_{\mathrm{D}}}
\newcommand{\hLD}{\hL_{\mathrm{D}}}
\newcommand{\cl}{\mathrm{cl}}
\newcommand{\card}[1]{\lvert{#1}\rvert}
\newcommand{\myparen}[1]{\left(#1\right)}
\newcommand{\mybrack}[1]{\left[#1\right]}
\tikzset{
  generator/.style={
    circle,
    inner sep=0pt, minimum size=1.5em,
    very thick, draw
  },
  bus/.style={
    rectangle,
    inner sep=0pt, minimum width=0em, minimum height=1.5em,
    very thick, draw
  }
}
\title{\bfseries Synchronization and Aggregation of Nonlinear Power Systems with
  Consideration of Bus Network Structures}
\author{Petar~Mlinari\'{c}$^{1}$, Takayuki~Ishizaki$^{2}$,
  Aranya~Chakrabortty$^{3}$, Sara~Grundel$^{1}$, Peter~Benner$^{1}$, and
  Jun-ichi~Imura$^{2}$%
  \thanks{$^{1}$Computational Methods in Systems and Control Theory,
    Max Planck Institute for Dynamics of Complex Technical Systems,
    39106 Magdeburg, Germany: \newline
    \texttt{\string{mlinaric,grundel,benner\string}@mpi-magdeburg.mpg.de}}%
  \thanks{$^{2}$Department of Systems and Control Engineering,
    School of Engineering,
    Tokyo Institute of Technology,
    2-12-1, Meguro, Tokyo, Japan: \newline
    \texttt{\string{ishizaki,imura\string}@cs.e.titech.ac.jp}}%
  \thanks{$^{3}$Electrical \& Computer Engineering,
    North Carolina State University,
    Raleigh, NC 27695: \newline
    \texttt{aranya.chakrabortty@ncsu.edu}}%
}
\begin{document}
\maketitle
\thispagestyle{empty}
\pagestyle{empty}

\begin{abstract}
  We study nonlinear power systems consisting of generators, generator buses,
  and non-generator buses.
  First, looking at a generator and its bus' variables jointly, we introduce a
  synchronization concept for a pair of such joint generators and buses.
  We show that this concept is related to graph symmetry.
  Next, we extend, in two ways, the synchronization from a pair to a partition
  of all generators in the networks and show that they are related to either
  graph symmetry or equitable partitions.
  Finally, we show how an exact reduced model can be obtained by aggregating the
  generators and associated buses in the network when the original system is
  synchronized with respect to a partition, provided that the initial condition
  respects the partition.
  Additionally, the aggregation-based reduced model is again a power system.
\end{abstract}

\section{Introduction}
A power system is a network of electrical generators, loads, and their
associated control elements.
Each of these components may be thought of as nodes of a graph, while the
transmission lines connecting them can be regarded as the edges of the graph.
The nodes are modeled by physical laws that typically lead to a set of
differential equations.
These differential equations are coupled to each other across the edges.
One question that has been of interest to power engineers over many years is how
do the graph-theoretic properties of these types of electrical networks impact
system-theoretic properties of the grid model~\cite{AnnA13}.

In this work, we study synchronization properties of power systems
(see~\cite{DoeB14} for an overview) using graph-theoretic tools.
Specifically, we show relations to graph symmetry and equitable
partitions~\cite{RahJMetal09}, extending the work in~\cite{morIshKI16a} for
linear systems to nonlinear power systems.
Additionally, based on our results about synchronization, we propose a
structure-preserving, aggregation-based model order reduction framework for
nonlinear power systems.
Further, we show that for certain partitions this reduction is exact.
In general, the dynamics of the reduced system can be used to approximate the
dynamics of the original power system.

The motivation for model aggregation, in addition to reducing simulation time,
is the possibility to simulate or control only a certain part of the grid, or a
certain phenomenon that happens only over a certain time-scale.
Some recent work on aggregation of linear network systems can be found
in~\cite{morIshKIetal14, morIshKGetal15, morMliGB15, morCheKS16, morXueC16b,
morCheKS17}.

In \Cref{sec:system}, we describe the system we analyze.
Next, we introduce synchronization for a pair of generators and prove necessary
and sufficient conditions in \Cref{sec:sync-two}.
In \Cref{sec:partition-sync}, we continue in a similar way with two notions of
synchronization with respect to a partition.
We discuss aggregation-based reduction in \Cref{sec:aggregation}.
Finally, we give a demonstration of our results in \Cref{sec:example}.

\section{System Description}\label{sec:system}
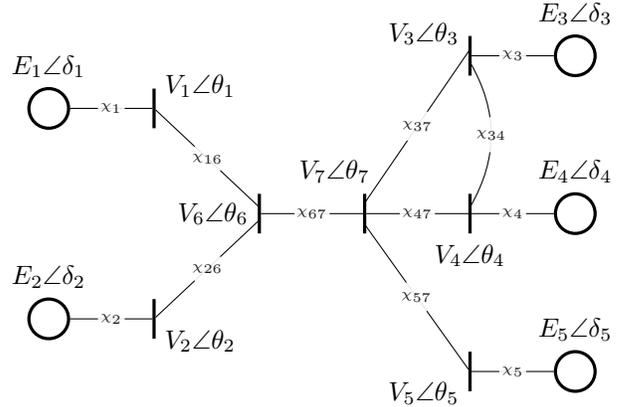
\begin{figure}[tb]
  \centering
  \begin{tikzpicture}[scale=1.4]

    \node[label=above:{$E_1 \angle \delta_1$}]
      (g1) [generator] at (-2, 1) {};
    \node[label=above:{$E_2 \angle \delta_2$}]
      (g2) [generator] at (-2, -1) {};
    \node[label=above:{$E_3 \angle \delta_3$}]
      (g3) [generator] at (3, 1.5) {};
    \node[label=above:{$E_4 \angle \delta_4$}]
      (g4) [generator] at (3, 0) {};
    \node[label=above:{$E_5 \angle \delta_5$}]
      (g5) [generator] at (3, -1.5) {};

    \node[label=above right:{$V_1 \angle \theta_1$}]
      (b1) [bus] at (-1, 1) {};
    \node[label=below right:{$V_2 \angle \theta_2$}]
      (b2) [bus] at (-1, -1) {};
    \node[label=above left:{$V_3 \angle \theta_3$}]
      (b3) [bus] at (2, 1.5) {};
    \node[label=below:{$V_4 \angle \theta_4$}]
      (b4) [bus] at (2, 0) {};
    \node[label=below left:{$V_5 \angle \theta_5$}]
      (b5) [bus] at (2, -1.5) {};
    \node[label=left:{$V_6 \angle \theta_6$}]
      (b6) [bus] at (0, 0) {};
    \node[label={[shift={(-0.4, 0)}]$V_7 \angle \theta_7$}]
      (b7) [bus] at (1, 0) {};

    \begin{scope}[every node/.style={
        circle,
        outer sep=0ex,
        inner sep=0.1ex,
        fill=white,
        fill opacity=0.9,
        text opacity=1}]

      \foreach \i in {1, ..., 5} {
        \draw (g\i) edge node {\tiny $\chi_{\i}$} (b\i);
      }

      \draw (b1) edge node {\tiny $\chi_{16}$} ($(b6)!0.3!(b6.north)$);
      \draw (b2) edge node {\tiny $\chi_{26}$} ($(b6)!0.3!(b6.south)$);
      \draw ($(b3)!0.3!(b3.south)$) to[bend left=30]
        node {\tiny $\chi_{34}$} ($(b4)!0.3!(b4.north)$);
      \draw ($(b3)!0.3!(b3.north)$) edge
        node {\tiny $\chi_{37}$} ($(b7)!0.5!(b7.north)$);
      \draw (b4) edge node {\tiny $\chi_{47}$} (b7);
      \draw (b5) edge node {\tiny $\chi_{57}$} ($(b7)!0.5!(b7.south)$);
      \draw (b6) edge node {\tiny $\chi_{67}$} (b7);
    \end{scope}
  \end{tikzpicture}
  \caption{Power system consisting of generators (circles) and buses (vertical
    bars), where the $i$th generator is only connected to the $i$th bus.
    See \Cref{tab:notation} for the notation.
  }\label{fig:power-system}
\end{figure}
\begin{table}[tb]
  \centering
  \caption{Notation}\label{tab:notation}
  \begin{tabular}{ll}
    Symbol & Description \\
    \hline
    $\imag$
           & imaginary unit ($\imag^2 = -1$) \\
    $\R$, $\C$
           & fields of real and complex numbers \\
    ${[a_{i}]}_{i \in S}$
           & vector $(a_{i_1}, a_{i_2}, \ldots, a_{i_n})$, if
             $S = \{i_1, i_2, \ldots, i_n\}$ \\
    $\diag(a)$
           & diagonal matrix with $a$ as its diagonal \\
    $A \circ B$
           & Hadamard (element-wise) product of two matrices \\
    $\one_n$
           & vector of ones of length $n$ \\
    $\one$
           & vector of ones with the length clear from context \\
    $e_i$
           & the $i$th column of the identity matrix \\
    $e_{S}$
           & matrix $[e_{i_1} \ e_{i_2} \ \cdots \ e_{i_n}]$, if
             $S = \{i_1, i_2, \ldots, i_n\}$ \\
    $\im{A}$
           & column space of matrix $A \in \F^{n \times m}$ \\
    $\card{S}$
           & cardinality of set $S$ \\
    $\sin$, $\cos$
           & functions applied element-wise to a vector or a matrix \\
    $\cG$
           & label set of generator buses \\
    $\ocG$
           & label set of non-generator buses \\
    $\bE_{\cG}(t)$
           & voltages of the generators at time $t$ \\
    $E_i$
           & voltage amplitude of the $i$th generator \\
    $\delta_i(t)$
           & voltage phase of the $i$th generator at time $t$ \\
    $\bV_{\cG}(t)$
           & voltages of the generator buses at time $t$ \\
    $\bV_{\ocG}(t)$
           & voltages of the non-generator buses at time $t$ \\
    $V_i(t)$
           & voltage amplitude of the $i$th bus at time $t$ \\
    $\theta_i(t)$
           & voltage phase of the $i$th bus at time $t$ \\
    $\bI_{\cG}(t)$
           & currents from generators to generator buses at time $t$ \\
    $\chi_{i}$
           & reactance between the $i$th generator and its bus \\
    $\chi_{ij}$
           & reactance between the $i$th and $j$th bus \\
    $\LD$
           & reactance matrix, $\diag({[\chi_{i}^{-1}]}_{i \in \cG})$ \\
    $L$
           & weighted graph Laplacian of the reactance network \\
    $\delta(t)$
           & ${[\delta_{i}(t)]}_{i \in \cG}$ \\
    $M$
           & diagonal matrix of inertias $M_i$ of the generators \\
    $D$
           & diagonal matrix of dissipativies $D_i$ of the generators \\
    $f$
           & vector of powers $f_i$ to the generators \\
    $X$
           & ${(\LD + L_{11} - L_{12} L_{22}^{-1} L_{12}^{\T})}^{-1} \LD$ \\
    $\Gamma$
           & $\LD {(\LD + L_{11} - L_{12} L_{22}^{-1} L_{12}^{\T})}^{-1} \LD$ \\
    $\gamma_{ij}$
           & ${[\Gamma]}_{ij}^{-1}$ \\
    $E$
           & ${[E_{i}]}_{i \in \cG}$ \\
    $V_{\cG}(t)$
           & ${[V_{i}(t)]}_{i \in \cG}$ \\
    $\theta_{\cG}(t)$
           & ${[\theta_{i}(t)]}_{i \in \cG}$ \\
    $\cX_{ij}$
           & subspace of synchronism $\{x \in \R^{n} : x_{i} = x_{j}\}$ \\
    $\cS_{ij}$
           & set of symmetrical matrices
             $\{A \in \R^{n \times n} : A \Pi_{ij} = \Pi_{ij} A\}$ \\
    $\cX_{\cl}$
           & $\bigcap_{\ell \in \hcG} \bigcap_{i, j \in \cI_{\ell}} \cX_{ij}$ \\
    $\cS_{\cl}$
           & $\bigcap_{\ell \in \hcG} \bigcap_{i, j \in \cI_{\ell}} \cS_{ij}$
  \end{tabular}
\end{table}
We use the power system example in \Cref{fig:power-system} to introduce the type
of system we analyze and to illustrate our results.
As in the example in \Cref{fig:power-system}, we consider power systems
consisting of generators and buses, where each generator is connected to exactly
one bus and buses can be classified into \emph{generator buses} (those connected
to one generator and some buses) and \emph{non-generator buses} (those connected
only to other buses).
We follow the classical model of a synchronous generator~\cite{Kun94}, which
means that the generators' voltage amplitude is constant over time $t$.

Let $\cG := \{1, 2, \ldots, n\}$ and $\ocG := \{n + 1, n + 2, \ldots, n + \on\}$
denote the label sets of generator and non-generator buses.
In the example in \Cref{fig:power-system}, we have $n = 5$ and $\on = 2$.
The vector of currents from generators to generator buses is given as
\begin{align}\label{eq:crnt}
  \bI_{\cG}(t)
  & =
    \frac{1}{\imag} \LD \myparen{\bE_{\cG}(t) - \bV_{\cG}(t)},
\end{align}
where the vectors of voltages of generators and generator buses are denoted as
\begin{align*}
  \bE_{\cG}(t)
  & :=
    \mybrack{E_{i}
      (\cos{\delta_{i}(t)} + \imag \sin{\delta_{i}(t)})}_{i \in \cG}
    \in \C^{n}, \\
  \bV_{\cG}(t)
  & :=
    \mybrack{V_{i}(t)
      (\cos{\theta_{i}(t)} + \imag \sin{\theta_{i}(t)})}_{i \in \cG}
    \in \C^{n},
\end{align*}
and $\LD$ is a positive diagonal reactance matrix given as
\begin{align*}
  \LD & := \diag\!\myparen{\mybrack{\chi_{i}^{-1}}_{i \in \cG}},
\end{align*}
where $\chi_{i}$ is the reactance between the $i$th generator and its bus
(see \Cref{fig:power-system}).
We assume the generator voltage amplitudes $E_i$ and reactances $\chi_i$ are
given constants.
Additionally, we assume the line resistances to be negligible.

The relation between the currents and voltages is given as
\begin{align}\label{eq:baleq}
  \begin{bmatrix}
    \bI_{\cG}(t) \\
    0
  \end{bmatrix}
  & =
    \frac{1}{\imag}
    \begin{bmatrix}
      L_{11} & L_{12} \\
      L_{12}^{\T} & L_{22}
    \end{bmatrix}
    \begin{bmatrix}
      \bV_{\cG}(t) \\
      \bV_{\ocG}(t)
    \end{bmatrix},
\end{align}
where the voltage vector of non-generator buses is denoted as
\begin{align*}
  \bV_{\ocG}(t)
  & :=
    \mybrack{V_{i}(t)
      (\cos{\theta_{i}(t)} + \imag \sin{\theta_{i}(t)})}_{i \in \ocG}
    \in \C^{\on}
\end{align*}
and $L = [L_{ij}] \in \R^{(n + \on) \times (n + \on)}$ denotes the weighted
graph Laplacian of the reactance network.
In particular, the $(i, j)$-th element of $L$ is $-\chi_{ij}^{-1}$ if the
$i$th and $j$th buses are connected (see \Cref{fig:power-system}) and the $i$th
diagonal element is $\sum_{j \neq i}{\chi_{ij}^{-1}}$.
In the following, we assume that the reactance network is connected, i.e.\ $L$
is irreducible.
This assumption can be made without loss of generality because the same
arguments can be applied to each connected component.
For the example in \Cref{fig:power-system} with $\chi_{ij} = 1$ for all $i, j$,
we have
\begin{align*}
  L
  & =
    \begin{bsmallmatrix*}[r]
      1 & 0 & 0 & 0 & 0 & -1 & 0 \\
      0 & 1 & 0 & 0 & 0 & -1 & 0 \\
      0 & 0 & 2 & -1 & 0 & 0 & -1 \\
      0 & 0 & -1 & 2 & 0 & 0 & -1 \\
      0 & 0 & 0 & 0 & 1 & 0 & -1 \\
      -1 & -1 & 0 & 0 & 0 & 3 & -1 \\
      0 & 0 & -1 & -1 & -1 & -1 & 4
    \end{bsmallmatrix*}.
\end{align*}

The dynamics of generators is given by
\begin{subequations}\label{eq:orig}
  \begin{align}\label{eq:orig-dyn}
    M \ddot{\delta}(t)
    + D \dot{\delta}(t)
    & =
      f - \mybrack{\frac{E_{i} V_{i}(t)}{\chi_{i}}
        \sin(\delta_{i}(t) - \theta_{i}(t))}_{i \in \cG},
  \end{align}
  with voltage phases $\delta(t) := {[\delta_{i}(t)]}_{i \in \cG}$, inertia
  constants $M := \diag\!\myparen{\mybrack{M_{i}}_{i \in \cG}}$, $M_i > 0$,
  damping constants $D := \diag\!\myparen{\mybrack{D_{i}}_{i \in \cG}}$,
  $D_i \ge 0$, and input powers $f \in \R^{n}$~\cite{Kun94}.
  Eliminating $\bI_{\cG}(t)$ from~\eqref{eq:crnt} and~\eqref{eq:baleq}, we
  obtain
  \begin{align}\label{eq:orig-cons}
    \begin{bmatrix}
      \LD \myparen{\bE_{\cG}(t) - \bV_{\cG}(t)} \\
      0
    \end{bmatrix}
    & =
      \begin{bmatrix}
        L_{11} & L_{12} \\
        L_{12}^{\T} & L_{22}
      \end{bmatrix}
      \begin{bmatrix}
        \bV_{\cG}(t) \\
        \bV_{\ocG}(t)
      \end{bmatrix},
  \end{align}
\end{subequations}
The set of equations~\eqref{eq:orig} forms a differential-algebraic system.
We can remove the algebraic constraints to find an equivalent set of
differential equations using Kron reduction~\cite{Kro39}.
First, from~\eqref{eq:orig-cons}, we find
\begin{align}
  \nonumber
  \bV_{\ocG}(t)
  & =
    -L_{22}^{-1} L_{12}^{\T} \bV_{\cG}(t), \\
  \label{eq:V=XE}
  \bV_{\cG}(t)
  & =
    X \bE_{\cG}(t),
\end{align}
where
\begin{align}\label{eq:X-def}
  X
  & :=
    \myparen{\LD + L_{11} - L_{12} L_{22}^{-1} L_{12}^{\T}}^{-1} \LD.
\end{align}
It follows that
\begin{align*}
  \Gamma
  & :=
    \LD \myparen{\LD + L_{11} - L_{12} L_{22}^{-1} L_{12}^{\T}}^{-1} \LD
    =
    \LD X
\end{align*}
is a positive definite matrix with positive elements, since
$\LD + L_{11} - L_{12} L_{22}^{-1} L_{12}^{\T}$ is positive definite and an
$M$-matrix (i.e., its eigenvalues have positive real parts and its off-diagonal
elements are nonpositive, which implies that the elements of its inverse are
positive).
We denote its elements by $\gamma_{ij}^{-1} := {[\Gamma]}_{ij}$.
Then, multiplying~\eqref{eq:V=XE} from the left by $\LD$, we find
\begin{align*}
  \mybrack{\frac{V_{i}(t)}{\chi_{i}} \cos{\theta_{i}(t)}}_{i \in \cG}
  & =
    \Gamma \mybrack{E_{i} \cos{\delta_{i}(t)}}_{i \in \cG}, \\
  \mybrack{\frac{V_{i}(t)}{\chi_{i}} \sin{\theta_{i}(t)}}_{i \in \cG}
  & =
    \Gamma \mybrack{E_{i} \sin{\delta_{i}(t)}}_{i \in \cG},
\end{align*}
which together with~\eqref{eq:orig-dyn} and the trigonometric identity
$\sin(\delta_{i}(t) - \theta_{i}(t))
= \sin{\delta_{i}(t)} \cos{\theta_{i}(t)}
- \cos{\delta_{i}(t)} \sin{\theta_{i}(t)}$ gives us
\begin{align*}
  \begin{aligned}
    &
      M \ddot{\delta}(t)
      + D \dot{\delta}(t) \\
    & =
      f - \bigl(
        \diag\!\myparen{\mybrack{E_{i} \sin{\delta_{i}(t)}}_{i \in \cG}}
        \Gamma
        \mybrack{E_{i} \cos{\delta_{i}(t)}}_{i \in \cG} \\
    & \qquad\qquad
        - \diag\!\myparen{\mybrack{E_{i} \cos{\delta_{i}(t)}}_{i \in \cG}}
        \Gamma
        \mybrack{E_{i} \sin{\delta_{i}(t)}}_{i \in \cG}
      \bigr).
  \end{aligned}
\end{align*}
Thus, now by using $\sin{\delta_{i}(t)} \cos{\delta_{j}(t)}
- \cos{\delta_{i}(t)} \sin{\delta_{j}(t)}
= \sin(\delta_{i}(t) - \delta_{j}(t))$, the Kron-reduced system
of~\eqref{eq:orig} is given as
\begin{subequations}
  \begin{align}
    \label{eq:kron-delta}
    M_i \ddot{\delta}_i(t)
    + D_i \dot{\delta}_i(t)
    & =
      f_i
      - \sum_{k = 1}^n{\frac{E_i E_k}{\gamma_{ik}}
        \sin(\delta_i(t) - \delta_k(t))},
    \intertext{with generator buses' voltages and phases satisfying}
    \label{eq:kron-V}
    \LD \bV_{\cG}(t) & = \Gamma \bE_{\cG}(t).
  \end{align}
\end{subequations}

Denoting $E := {[E_{i}]}_{i \in \cG}$, $V_{\cG}(t) := {[V_{i}(t)]}_{i \in \cG}$,
and $\theta_{\cG}(t) := {[\theta_{i}(t)]}_{i \in \cG}$, we can
write~\eqref{eq:orig-dyn} and~\eqref{eq:kron-delta} more compactly as
\begin{align}
  \label{eq:orig-rewrite}
  M \ddot{\delta}(t)
  + D \dot{\delta}(t)
  & =
    f - \LD \myparen{E \circ V_{\cG}(t)
      \circ \sin(\delta(t) - \theta_{\cG}(t))},
\end{align}
and
\begin{align*}
  \begin{aligned}
    &
      M \ddot{\delta}(t)
      + D \dot{\delta}(t) \\
    & =
      f
      - \myparen{\Gamma \circ E E^{\T}
        \circ \sin\!\myparen{\delta(t) \one_{n}^{\T}
          - \one_{n} {\delta(t)}^{\T}}} \one_{n}.
  \end{aligned}
\end{align*}

\section{Synchronization of Generator Pair}\label{sec:sync-two}
Let us denote the subspace of the synchronism between the $i$th and $j$th
elements by
\begin{align*}
  \cX_{ij} := \{x \in \R^{n} : x_{i} = x_{j}\}.
\end{align*}
In this notation, we introduce the following notion of synchronism for the
power system~\eqref{eq:orig}.
\begin{definition}
  Consider the power system~\eqref{eq:orig}.
  The $i$th and $j$th generators are said to be \textit{synchronized} if
  \begin{align*}
    \delta(t) \in \cX_{ij} \text{ and }
    \bV_{\cG}(t) \in \cX_{ij}, \text{ for all } t \geq 0
  \end{align*}
  and for any initial condition $\delta(0), \dot{\delta}(0) \in \cX_{ij}$.
\end{definition}
To characterize this generator synchronism in an algebraic manner, let us define
a set of symmetrical matrices with respect to the permutation of the $i$th and
$j$th columns and rows by
\begin{align}\label{eq:symmat}
  \cS_{ij}
  & :=
    \{A \in \R^{n \times n} : A \Pi_{ij} = \Pi_{ij} A\},
\end{align}
where $\Pi_{ij}$ denotes the permutation matrix associated with the $i$th and
$j$th elements, i.e., all diagonal elements of $\Pi_{ij}$ other than the $i$th
and $j$th elements are $1$, the $(i, j)$-th and $(j, i)$-th elements are $1$,
and the others are zero.
Note that $\cS_{ij}$ is not the set of usual symmetric (Hermitian) matrices; the
condition in~\eqref{eq:symmat} represents the invariance with respect to the
permutation of the $i$th and $j$th columns and rows, i.e.,
$\Pi_{ij}^{\T} A \Pi_{ij} = A$.
See \Cref{thm:symmetrical-equivalence} for equivalent conditions.

We state the main result about synchronization of a pair of generators and prove
it in the remainder of this Section.
\begin{theorem}\label{thm:Mi=Mj}
  Consider the power system~\eqref{eq:orig}.
  The following two statements hold.
  \begin{enumerate}
  \item\label{itm:Mi=Mj_n=2}
    Let $n = 2$ and $M_1 = M_2$.
    Then the two generators are synchronized if and only if $D_1 = D_2$,
    $f_1 = f_2$, and $E_1 = E_2$.
  \item\label{itm:Mi=Mj_n>=3}
    Let $n \ge 3$ and $M \in \cS_{ij}$.
    Then the $i$th and $j$th generators are synchronized if and only if
    $D \in \cS_{ij}$, $f \in \cX_{ij}$, $E \in \cX_{ij}$, and
    $\Gamma \in \cS_{ij}$.
  \end{enumerate}
\end{theorem}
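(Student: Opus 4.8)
The plan is to work with the Kron-reduced description \eqref{eq:kron-delta}--\eqref{eq:kron-V}, which is equivalent to \eqref{eq:orig}. Writing it as $M\ddot\delta(t) + D\dot\delta(t) = f - g(\delta(t))$ with $g(\delta) := \bigl(\Gamma \circ E E^{\T} \circ \sin(\delta\one_n^{\T} - \one_n\delta^{\T})\bigr)\one_n$, coupled with $\LD\bV_{\cG}(t) = \Gamma\bE_{\cG}(t)$, one notes that $g$ is smooth with bounded derivatives, so the first-order system in $(\delta,\dot\delta)$ is globally Lipschitz and has unique $C^{\infty}$ solutions on $[0,\infty)$; synchronization of generators $i$ and $j$ then means precisely that the subspace $\cX_{ij}\times\cX_{ij}$ is invariant for this flow and that $\bV_{\cG}(t)\in\cX_{ij}$ along the corresponding trajectories. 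Throughout I will use the basic properties of $\cS_{ij}$ and $\cX_{ij}$ (cf.\ \Cref{thm:symmetrical-equivalence}): $\cS_{ij}$ is a unital subalgebra of $\R^{n\times n}$ closed under transpose, Hadamard product, entrywise $\sin$, and inversion; $\one_n\in\cX_{ij}$; and $A\in\cS_{ij}$, $x\in\cX_{ij}$ imply $Ax\in\cX_{ij}$. I will also use two facts derivable from \Cref{sec:system}: $X\one_n = \one_n$ (since $L\one = 0$ makes the Schur complement $L_{11}-L_{12}L_{22}^{-1}L_{12}^{\T}$ annihilate $\one$), and $\Gamma$ is symmetric positive definite with positive entries and $\Gamma\one_n = \LD\one_n$.

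For the ``if'' direction I would check that the vector field of the second-order system is tangent to $\cX_{ij}\times\cX_{ij}$. For $n\ge 3$, if $\delta,\omega\in\cX_{ij}$ then $f\in\cX_{ij}$, $D\omega\in\cX_{ij}$, $E E^{\T}\in\cS_{ij}$, $\delta\one_n^{\T}-\one_n\delta^{\T}\in\cS_{ij}$, hence $g(\delta)\in\cX_{ij}$ (the Hadamard product with $\Gamma\in\cS_{ij}$ stays in $\cS_{ij}$, and right-multiplying by $\one_n$ lands in $\cX_{ij}$), so $\ddot\delta = M^{-1}(f-D\omega-g(\delta))\in\cX_{ij}$ because $M\in\cS_{ij}$; invariance then follows by uniqueness of solutions. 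Moreover $\Gamma\in\cS_{ij}$ with $\Gamma\one_n = \LD\one_n$ forces $\chi_i=\chi_j$, so $\LD\in\cS_{ij}$ and $\bV_{\cG}(t)=X\bE_{\cG}(t)\in\cX_{ij}$ whenever $\delta(t)\in\cX_{ij}$ and $E\in\cX_{ij}$. For $n=2$ the hypothesis $\Gamma\in\cS_{12}$ is unavailable but not needed: on $\cX_{12}$ one has $g(\delta)_1 = -g(\delta)_2 = \tfrac{E_1 E_2}{\gamma_{12}}\sin(\delta_1-\delta_2)=0$ and $\bV_{\cG}=X\bE_{\cG}=E_1 e^{\imag\delta_1}X\one_2 = E_1 e^{\imag\delta_1}\one_2\in\cX_{12}$, so $M_1=M_2$, $D_1=D_2$, $f_1=f_2$, $E_1=E_2$ already give both the invariance and the voltage condition.

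For the ``only if'' direction, synchronization forces $\ddot\delta(0)\in\cX_{ij}$ for every admissible initial condition, which (using $M\in\cS_{ij}$) is equivalent to $f-D\dot\delta(0)-g(\delta(0))\in\cX_{ij}$. Taking $\delta(0)=\dot\delta(0)=0$ gives $f\in\cX_{ij}$; varying $\dot\delta(0)$ over $\cX_{ij}$ gives $D\in\cS_{ij}$, i.e.\ $D_i=D_j$; and varying $\delta(0)=:a$ over $\cX_{ij}$ gives $g(a)\in\cX_{ij}$. Expanding $g(a)_i-g(a)_j = \sum_{k\neq i,j} E_k\bigl(E_i\gamma_{ik}^{-1}-E_j\gamma_{jk}^{-1}\bigr)\sin(a_i-a_k)$ and choosing $a\in\cX_{ij}$ so that exactly one $\sin(a_i-a_k)$ is nonzero yields $E_i\gamma_{ik}^{-1}=E_j\gamma_{jk}^{-1}$ for every $k\neq i,j$ (this is vacuous for $n=2$). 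Separately, $\bV_{\cG}(0)\in\cX_{ij}$ reads $\sum_k (X_{ik}-X_{jk})E_k e^{\imag\delta_k(0)}=0$ for all $\delta(0)\in\cX_{ij}$; linear independence of the coordinate exponentials, together with $X_{pq}=\chi_p\gamma_{pq}^{-1}$, gives $\chi_i\gamma_{ik}^{-1}=\chi_j\gamma_{jk}^{-1}$ for $k\neq i,j$ and the scalar identity $(\chi_i\gamma_{ii}^{-1}-\chi_j\gamma_{ij}^{-1})E_i+(\chi_i\gamma_{ij}^{-1}-\chi_j\gamma_{jj}^{-1})E_j=0$.

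The remaining step — which I expect to be the main obstacle — is to combine these relations with $\Gamma=\Gamma^{\T}\succ 0$ and $\Gamma\one_n=\LD\one_n$ to extract $E\in\cX_{ij}$ and $\Gamma\in\cS_{ij}$. For $n\ge 3$, dividing the two off-diagonal relations gives $E_i/\chi_i=E_j/\chi_j$; with $\rho:=E_i/E_j=\chi_i/\chi_j$ the off-diagonal relations become $\gamma_{ik}=\rho\gamma_{jk}$ for $k\neq i,j$, the row sums of $\Gamma\one_n=\LD\one_n$ at rows $i$ and $j$ give $\gamma_{ii}^{-1}=\rho^{-1}\gamma_{jj}^{-1}+(\rho^{-1}-1)\gamma_{ij}^{-1}$, and the scalar identity simplifies to $\gamma_{ii}^{-1}=\rho^{-2}\gamma_{jj}^{-1}$; eliminating $\gamma_{ii}^{-1}$ forces $\rho=1$ or $\gamma_{jj}^{-1}=\rho\gamma_{ij}^{-1}$, and in the latter case the $\{i,j\}$ principal submatrix of $\Gamma$ equals $\gamma_{jj}^{-1}\begin{bsmallmatrix}\rho^{-2}&\rho^{-1}\\\rho^{-1}&1\end{bsmallmatrix}$, which is singular — contradicting $\Gamma\succ 0$. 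Hence $\rho=1$, so $E_i=E_j$, $\chi_i=\chi_j$, $\gamma_{ik}=\gamma_{jk}$ for $k\neq i,j$, and $\gamma_{ii}=\gamma_{jj}$, i.e.\ $E\in\cX_{ij}$ and $\Gamma\in\cS_{ij}$. For $n=2$ the off-diagonal relations are absent, and substituting $\gamma_{11}^{-1}=\chi_1^{-1}-\gamma_{12}^{-1}$, $\gamma_{22}^{-1}=\chi_2^{-1}-\gamma_{12}^{-1}$ into the scalar identity reduces it to $(E_1-E_2)\bigl(1-\gamma_{12}^{-1}(\chi_1+\chi_2)\bigr)=0$; a short computation with the $2\times 2$ Schur complement $L_{11}-L_{12}L_{22}^{-1}L_{12}^{\T}$ shows $\gamma_{12}^{-1}(\chi_1+\chi_2)<1$, so $E_1=E_2$, completing the argument.
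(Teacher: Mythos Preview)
Your proof is correct and follows the same overall architecture as the paper: both reduce synchronization to tangency of the Kron-reduced vector field to $\cX_{ij}\times\cX_{ij}$ together with the algebraic voltage constraint, extract the scalar conditions \eqref{eq:min1}--\eqref{eq:min5} by evaluating at judiciously chosen initial data, and then process those conditions using $X\one=\one$ and the structure of $\Gamma$.

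The one substantive difference is in how you force $E_i=E_j$. The paper does this uniformly in $n$: from \eqref{eq:min4} and $X\one=\one$ one has $X_{ii}+X_{ij}=X_{ji}+X_{jj}$, and combining this with \eqref{eq:min5} shows that $E_i\neq E_j$ would make the $i$th and $j$th rows of $X$ coincide, contradicting invertibility of $X$. You instead split cases: for $n\ge3$ you introduce the ratio $\rho=E_i/E_j=\chi_i/\chi_j$ and argue that $\rho\neq1$ makes the $\{i,j\}$ principal minor of $\Gamma$ vanish, contradicting $\Gamma\succ0$; for $n=2$ you reduce to the strict inequality $\gamma_{12}^{-1}(\chi_1+\chi_2)<1$ via the explicit $2\times2$ Schur complement. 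Both routes are sound; the paper's is shorter and avoids the case split, while yours makes the role of positive definiteness of $\Gamma$ (rather than mere invertibility of $X$) explicit.
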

\begin{remark}
  Essentially, this result shows that the $i$th and $j$th generators are
  synchronized when the system equation are invariant under swapping the $i$th
  and $j$th label.
\end{remark}
We arrange the proof of \Cref{thm:Mi=Mj} into a sequence of Propositions in this
Section, with some technical Lemmas in the Appendix.
We begin by analyzing the equations of the system~\eqref{eq:orig} without
assumptions on $n$ and $M$.
\begin{proposition}\label{thm:min}
  The $i$th and $j$th generators are synchronized if and only if
  \begin{align}
    \displaybreak[0]
    \label{eq:min1}
    \frac{D_i}{M_i}
    & =
      \frac{D_j}{M_j}, \\
    \displaybreak[0]
    \label{eq:min2}
    \frac{f_i}{M_i}
    & =
      \frac{f_j}{M_j}, \\
    \displaybreak[0]
    \label{eq:min3}
    \frac{E_i}{M_i \gamma_{ik}}
    & =
      \frac{E_j}{M_j \gamma_{jk}}, \text{ for } k \neq i, j, \\
    \displaybreak[0]
    \label{eq:min4}
    \frac{\chi_{i}}{\gamma_{ik}}
    & =
      \frac{\chi_{j}}{\gamma_{jk}}, \text{ for } k \neq i, j, \text{ and} \\
    \label{eq:min5}
    \frac{\chi_{i} E_i}{\gamma_{ii}}
    + \frac{\chi_{i} E_j}{\gamma_{ij}}
    & =
      \frac{\chi_{j} E_i}{\gamma_{ji}}
      + \frac{\chi_{j} E_j}{\gamma_{jj}}.
  \end{align}
\end{proposition}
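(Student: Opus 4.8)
The plan is to split the definition of synchronization into one requirement coming from the differential equation~\eqref{eq:kron-delta} and one from the algebraic relation~\eqref{eq:kron-V}, and then to match coefficients of linearly independent functions. Observe first that~\eqref{eq:kron-V} expresses each component of $\bV_{\cG}$ as an explicit function of $\delta$, namely ${[\bV_{\cG}]}_i = \chi_i \sum_{k} \frac{E_k}{\gamma_{ik}} (\cos\delta_k + \imag\sin\delta_k)$. Evaluating the requirement $\bV_{\cG}(t) \in \cX_{ij}$ at $t = 0$, where $\delta(0)$ may be taken to be \emph{any} point of $\cX_{ij}$ (say with $\dot{\delta}(0) = 0$), forces the pointwise identity ${[\bV_{\cG}]}_i = {[\bV_{\cG}]}_j$ on all of $\cX_{ij}$; conversely, that identity together with $\delta(t) \in \cX_{ij}$ recovers $\bV_{\cG}(t) \in \cX_{ij}$. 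So the first step is to record the equivalence: the $i$th and $j$th generators are synchronized if and only if \textbf{(a)} the linear subspace $\{(\delta, \dot{\delta}) \in \R^{2n} : \delta, \dot{\delta} \in \cX_{ij}\}$ is forward-invariant for~\eqref{eq:kron-delta}, and \textbf{(b)} ${[\bV_{\cG}]}_i = {[\bV_{\cG}]}_j$ holds identically on $\cX_{ij}$.

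Next I would turn to \textbf{(a)}. Writing~\eqref{eq:kron-delta} as a first-order system $\dot{x} = F(x)$ with $x = (\delta, \dot{\delta})$, and using that the subspace in \textbf{(a)} is closed and $F$ is smooth, forward-invariance is equivalent, by uniqueness of solutions, to $F$ mapping this subspace into itself, i.e.\ to $\ddot{\delta}_i = \ddot{\delta}_j$ whenever $\delta_i = \delta_j$ and $\dot{\delta}_i = \dot{\delta}_j$. Substituting $\delta_i = \delta_j =: \phi$ into~\eqref{eq:kron-delta} annihilates the $k \in \{i, j\}$ summands, so on that subspace $\ddot{\delta}_i - \ddot{\delta}_j$ equals
\begin{align*}
  & \myparen{\frac{f_i}{M_i} - \frac{f_j}{M_j}}
    - \myparen{\frac{D_i}{M_i} - \frac{D_j}{M_j}} \dot{\delta}_i \\
  & \quad - \sum_{k \neq i, j} E_k
    \myparen{\frac{E_i}{M_i \gamma_{ik}} - \frac{E_j}{M_j \gamma_{jk}}}
    \sin(\phi - \delta_k),
\end{align*}
which must vanish for all $\phi$, $\dot{\delta}_i$ and ${\{\delta_k\}}_{k \neq i, j}$, since these parametrize the subspace freely. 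As $1$, $\dot{\delta}_i$ and the functions $\sin(\phi - \delta_k)$ are linearly independent, every coefficient must vanish, and after cancelling the positive factors $E_k$ (recall that the entries of $\Gamma$ are positive) this is exactly~\eqref{eq:min1},~\eqref{eq:min2} and~\eqref{eq:min3}.

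For \textbf{(b)}, substituting $\delta_i = \delta_j =: \phi$ into ${[\bV_{\cG}]}_i - {[\bV_{\cG}]}_j$ and grouping terms yields a trigonometric polynomial in $\phi$ and ${\{\delta_k\}}_{k \neq i, j}$ whose coefficient of $\cos\phi + \imag\sin\phi$ is $\chi_i\myparen{\frac{E_i}{\gamma_{ii}} + \frac{E_j}{\gamma_{ij}}} - \chi_j\myparen{\frac{E_i}{\gamma_{ji}} + \frac{E_j}{\gamma_{jj}}}$ and whose coefficient of $\cos\delta_k + \imag\sin\delta_k$ is $E_k\myparen{\frac{\chi_i}{\gamma_{ik}} - \frac{\chi_j}{\gamma_{jk}}}$. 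Requiring this polynomial to vanish identically on $\cX_{ij}$ forces every coefficient to be zero, which --- once more cancelling $E_k$ in the second family --- is precisely~\eqref{eq:min5} and~\eqref{eq:min4}. Combining \textbf{(a)} and \textbf{(b)} proves the proposition. I expect the only genuinely delicate points to be the two reductions in the first paragraph, namely that forward-invariance of the subspace is equivalent to the infinitesimal ``maps-into-itself'' condition and that the condition $\bV_{\cG}(t) \in \cX_{ij}$ collapses to a pointwise identity, together with the elementary linear-independence facts used to separate the coefficients; these are naturally isolated as the technical lemmas in the Appendix, after which the argument is pure bookkeeping.
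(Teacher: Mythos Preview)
Your argument is correct and matches the paper's proof almost verbatim: both split the synchronization condition into an invariance requirement for~\eqref{eq:kron-delta} and an algebraic identity from~\eqref{eq:kron-V}, compute $\ddot{\delta}_i - \ddot{\delta}_j$ and ${[\bV_{\cG}]}_i - {[\bV_{\cG}]}_j$ on $\cX_{ij}$, and then separate coefficients---the paper by plugging in specific values of $\dot{\delta}_i$ and $\delta_k$, you by invoking linear independence of $1$, $\dot{\delta}_i$, and $\sin(\phi - \delta_k)$, which is the same thing. One small misexpectation: the Appendix lemmas are not the invariance/linear-independence facts you anticipate but rather structural facts about $X$, $\Gamma$, and $\cS_{ij}$ used later in the Section (e.g.\ $X\one = \one$); the paper treats the coefficient-matching as self-evident within the proof itself.
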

\begin{proof}
  From~\eqref{eq:kron-delta}, we get
  \begin{multline*}
    \ddot{\delta}_i - \ddot{\delta}_j
    =
    -\frac{D_i}{M_i} \dot{\delta}_i
    + \frac{D_j}{M_j} \dot{\delta}_j
    + \frac{f_i}{M_i}
    - \frac{f_j}{M_j} \\
    - \sum_{k = 1}^n{\myparen{
      \frac{E_i E_k}{M_i \gamma_{ik}} \sin(\delta_i - \delta_k)
      - \frac{E_j E_k}{M_j \gamma_{jk}} \sin(\delta_j - \delta_k)}}.
  \end{multline*}
  It is clear that, if \eqref{eq:min1}, \eqref{eq:min2}, and \eqref{eq:min3} are
  true, then $\delta, \dot{\delta} \in \cX_{ij}$ implies
  $\ddot{\delta} \in \cX_{ij}$.
  For the other direction, let us assume that the $i$th and $j$th generators are
  synchronized.
  Then we necessarily have
  \begin{align*}
    &
      -\myparen{\frac{D_i}{M_i} - \frac{D_j}{M_j}} \dot{\delta}_i
      + \myparen{\frac{f_i}{M_i} - \frac{f_j}{M_j}} \\
    & \quad
      - \sum_{k = 1}^n{\myparen{\myparen{\frac{E_i E_k}{M_i \gamma_{ik}}
        - \frac{E_j E_k}{M_j \gamma_{jk}}} \sin(\delta_i - \delta_k)}}
      = 0,
  \end{align*}
  for any $\delta_i$, $\dot{\delta}_i$, and $\delta_k$, $k \neq i, j$.
  Choosing $\dot{\delta}_i = 0$ and $\delta_k = \delta_i$,
  condition~\eqref{eq:min2} follows.
  Taking $\dot{\delta}_i = 1$ and $\delta_k = \delta_i$, we find
  condition~\eqref{eq:min1}.
  Lastly, with $\delta_i - \delta_k = \frac{\pi}{2}$ for some $k \neq i, j$ and
  $\delta_i - \delta_{\ell} = 0$ for $\ell \neq i, j, k$,
  condition~\eqref{eq:min3} follows for the chosen $k$.

  From~\eqref{eq:kron-V}, we have
  \begin{align*}
    &
      V_i \cos{\theta_i}
      - V_j \cos{\theta_j} \\
    & =
      \myparen{\frac{\chi_{i} E_i}{\gamma_{ii}}
        - \frac{\chi_{j} E_i}{\gamma_{ji}}} \cos{\delta_i}
      + \myparen{\frac{\chi_{i} E_j}{\gamma_{ij}}
        - \frac{\chi_{j} E_j}{\gamma_{jj}}} \cos{\delta_j} \\
    \displaybreak[0]
    & \qquad
      + \sum_{\substack{k = 1 \\ k \neq i, j}}^n{
        \myparen{\frac{\chi_{i}}{\gamma_{ik}}
          - \frac{\chi_{j}}{\gamma_{jk}}} E_k \cos{\delta_k}}, \\
    &
      V_i \sin{\theta_i}
      - V_j \sin{\theta_j} \\
    & =
      \myparen{\frac{\chi_{i} E_i}{\gamma_{ii}}
        - \frac{\chi_{j} E_i}{\gamma_{ji}}} \sin{\delta_i}
      + \myparen{\frac{\chi_{i} E_j}{\gamma_{ij}}
        - \frac{\chi_{j} E_j}{\gamma_{jj}}} \sin{\delta_j} \\
    & \qquad
      + \sum_{\substack{k = 1 \\ k \neq i, j}}^n{
        \myparen{\frac{\chi_{i}}{\gamma_{ik}}
          - \frac{\chi_{j}}{\gamma_{jk}}} E_k \sin{\delta_k}}.
  \end{align*}
  Similarly, if we assume conditions~\eqref{eq:min4} and~\eqref{eq:min5} to be
  true, then $\delta_i = \delta_j$ implies
  $V_i \cos{\theta_i} = V_j \cos{\theta_j}$ and
  $V_i \sin{\theta_i} = V_j \sin{\theta_j}$, which in turn implies that
  $\bV_{\cG} \in \cX_{ij}$.
  Conversely, we have
  \begin{align*}
    0
    & =
      \myparen{\frac{\chi_{i} E_i}{\gamma_{ii}}
        + \frac{\chi_{i} E_j}{\gamma_{ij}}
        - \frac{\chi_{j} E_i}{\gamma_{ji}}
        - \frac{\chi_{j} E_j}{\gamma_{jj}}} \cos{\delta_i} \\
    \displaybreak[0]
    & \qquad
      + \sum_{\substack{k = 1 \\ k \neq i, j}}^n{
        \myparen{\frac{\chi_{i}}{\gamma_{ik}}
          - \frac{\chi_{j}}{\gamma_{jk}}} E_k \cos{\delta_k}}, \\
    0
    & =
      \myparen{\frac{\chi_{i} E_i}{\gamma_{ii}}
        + \frac{\chi_{i} E_j}{\gamma_{ij}}
        - \frac{\chi_{j} E_i}{\gamma_{ji}}
        - \frac{\chi_{j} E_j}{\gamma_{jj}}} \sin{\delta_i} \\
    & \qquad
      + \sum_{\substack{k = 1 \\ k \neq i, j}}^n{
        \myparen{\frac{\chi_{i}}{\gamma_{ik}}
          - \frac{\chi_{j}}{\gamma_{jk}}} E_k \sin{\delta_k}}.
  \end{align*}
  for arbitrary $\delta_i$ and $\delta_k$ for $k \neq i, j$.
  By appropriate choices of $\delta_i$ and $\delta_k$,
  conditions~\eqref{eq:min4} and~\eqref{eq:min5} follow.
\end{proof}
Let us now assume that $E_i \neq E_j$ and see what follows from conditions of
\Cref{thm:min}.
From~\eqref{eq:min4} and \Cref{thm:X1=1}, it follows that
$\frac{\chi_{i}}{\gamma_{ii}} + \frac{\chi_{i}}{\gamma_{ij}}
= \frac{\chi_{j}}{\gamma_{ji}} + \frac{\chi_{j}}{\gamma_{jj}}$.
Then, by~\eqref{eq:min5} and $E_i \neq E_j$, it is necessary that
$\frac{\chi_{i}}{\gamma_{ii}} = \frac{\chi_{j}}{\gamma_{ji}}$ and
$\frac{\chi_{i}}{\gamma_{ij}} = \frac{\chi_{j}}{\gamma_{jj}}$.
This, together with~\eqref{eq:min4}, means that the $i$th and $j$th rows in $X$
are equal, which is a contradiction with $X$ being invertible.
Therefore, for $i$th and $j$th generators to be synchronized, it is necessary
that $E_i = E_j$.
This allows us to simplify the statement of \Cref{thm:min}.
We can simplify it further by assuming $M_i = M_j$, which gives us the following
Corollary.
\begin{corollary}\label{thm:Mi=Mj_2}
  Let $M_i = M_j$.
  Then the $i$th and $j$th generators are synchronized if and only if
  \begin{align}
    \displaybreak[0]
    \nonumber
    D_i
    & =
      D_j, \\
    \displaybreak[0]
    \nonumber
    f_i
    & =
      f_j, \\
    \displaybreak[0]
    \nonumber
    E_i
    & =
      E_j, \\
    \displaybreak[0]
    \label{eq:MDf1}
    \gamma_{ik}
    & =
      \gamma_{jk}, \text{ for } k \neq i, j, \\
    \displaybreak[0]
    \label{eq:MDf2}
    \frac{\chi_{i}}{\gamma_{ik}}
    & =
      \frac{\chi_{j}}{\gamma_{jk}}, \text{ for } k \neq i, j, \text{ and} \\
    \label{eq:MDf3}
    \frac{\chi_{i}}{\gamma_{ii}}
    + \frac{\chi_{i}}{\gamma_{ij}}
    & =
      \frac{\chi_{j}}{\gamma_{ji}}
      + \frac{\chi_{j}}{\gamma_{jj}}.
  \end{align}
\end{corollary}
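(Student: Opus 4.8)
The plan is to obtain \Cref{thm:Mi=Mj_2} as a direct specialization of \Cref{thm:min} to the case $M_i = M_j$, using in addition the necessity of $E_i = E_j$ that was established in the paragraph preceding the Corollary (and the physical fact that generator voltage amplitudes are strictly positive, so $E_i = E_j \neq 0$). No new machinery is needed; the whole argument is a cancellation of common factors, carried out separately in the two directions.

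For the ``only if'' direction, I would assume the $i$th and $j$th generators are synchronized. Then \eqref{eq:min1}--\eqref{eq:min5} all hold, and moreover $E_i = E_j$. Substituting $M_i = M_j$ into \eqref{eq:min1} and \eqref{eq:min2} yields $D_i = D_j$ and $f_i = f_j$. Substituting $M_i = M_j$ and $E_i = E_j \neq 0$ into \eqref{eq:min3} and cancelling the factor $E_i/M_i = E_j/M_j$ gives $\gamma_{ik} = \gamma_{jk}$ for $k \neq i, j$, i.e.\ \eqref{eq:MDf1}. Condition \eqref{eq:min4} is already exactly \eqref{eq:MDf2}. Finally, dividing \eqref{eq:min5} through by $E_i = E_j$ gives \eqref{eq:MDf3}.

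For the ``if'' direction, I would assume $M_i = M_j$ together with $D_i = D_j$, $f_i = f_j$, $E_i = E_j$, \eqref{eq:MDf1}, \eqref{eq:MDf2}, and \eqref{eq:MDf3}, and then check that all five hypotheses of \Cref{thm:min} follow: \eqref{eq:min1} from $D_i = D_j$ and $M_i = M_j$; \eqref{eq:min2} from $f_i = f_j$ and $M_i = M_j$; \eqref{eq:min3} by multiplying \eqref{eq:MDf1} (rewritten as $\gamma_{ik}^{-1} = \gamma_{jk}^{-1}$) by the common value $E_i/M_i = E_j/M_j$; \eqref{eq:min4} is \eqref{eq:MDf2}; and \eqref{eq:min5} by multiplying \eqref{eq:MDf3} by $E_i = E_j$. \Cref{thm:min} then gives synchronization of the two generators.

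The computation is entirely routine, so there is no substantive obstacle; the only point deserving a word is that the passages \eqref{eq:min3} $\leftrightarrow$ \eqref{eq:MDf1} and \eqref{eq:min5} $\leftrightarrow$ \eqref{eq:MDf3} rely on cancelling the common factor $E_i = E_j$, which is legitimate precisely because the generator voltage amplitudes are positive constants, hence nonzero.
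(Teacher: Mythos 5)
Your proposal is correct and follows essentially the same route as the paper: the paper likewise establishes the necessity of $E_i = E_j$ in the paragraph immediately preceding the Corollary (via \Cref{thm:X1=1} and the invertibility of $X$) and then obtains the Corollary by substituting $M_i = M_j$ and $E_i = E_j$ into the conditions of \Cref{thm:min}; you merely write out the routine cancellations that the paper leaves implicit. The only caveat is that the necessity of $E_i = E_j$ is the one non-routine ingredient, and you cite it rather than reprove it, exactly as the paper's own (implicit) proof does.
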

In the following, we separate the $n = 2$ and $n \ge 3$ cases.
First, we use \Cref{thm:Mi=Mj_2} to prove part~\ref{itm:Mi=Mj_n=2} of
\Cref{thm:Mi=Mj}.
\begin{proof}[{Proof of \Cref{thm:Mi=Mj}, part~\ref{itm:Mi=Mj_n=2}}]
  This is true since~\eqref{eq:MDf1} and~\eqref{eq:MDf2} are empty statements,
  while~\eqref{eq:MDf3} follows immediately from \Cref{thm:X1=1}.
\end{proof}
Finally, to prove part~\ref{itm:Mi=Mj_n>=3} of \Cref{thm:Mi=Mj}, we simplify the
statement of \Cref{thm:Mi=Mj_2} for the case of $n \ge 3$.
This gives us the following Corollary.
\begin{corollary}\label{thm:Mi=Mj_n>=3_2}
  Let $n \ge 3$ and $M_i = M_j$.
  Then the $i$th and $j$th generators are synchronized if and only if
  \begin{align}
    \displaybreak[0]
    \nonumber
    D_i & = D_j, \\
    \displaybreak[0]
    \nonumber
    f_i & = f_j, \\
    \displaybreak[0]
    \nonumber
    E_i & = E_j, \\
    \displaybreak[0]
    \label{eq:MDfE1}
    \gamma_{ik} & = \gamma_{jk}, \text{ for } k \neq i, j, \\
    \displaybreak[0]
    \label{eq:MDfE2}
    \chi_{i} & = \chi_{j}, \text{ and} \\
    \label{eq:MDfE3}
    \gamma_{ii} & = \gamma_{jj}.
  \end{align}
\end{corollary}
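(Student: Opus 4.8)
Let me lay out the plan. The key observation is that the three scalar conditions $D_i = D_j$, $f_i = f_j$, and $E_i = E_j$ appear \emph{verbatim} in both \Cref{thm:Mi=Mj_2} and the statement to be proved, so it suffices to show that, under the standing hypotheses $n \ge 3$ and $M_i = M_j$, the three ``network'' conditions \eqref{eq:MDf1}--\eqref{eq:MDf3} are together equivalent to \eqref{eq:MDfE1}--\eqref{eq:MDfE3}; \Cref{thm:Mi=Mj_2} then supplies the link back to synchronism. The only structural facts about $\Gamma$ I will use are that it is symmetric and has strictly positive entries. Positivity was recorded when $\Gamma$ was introduced, and symmetry follows from $\Gamma = \LD (\LD + L_{11} - L_{12} L_{22}^{-1} L_{12}^{\T})^{-1} \LD$, since $\LD$ is diagonal and $\LD + L_{11} - L_{12} L_{22}^{-1} L_{12}^{\T}$ is symmetric (a diagonal shift of the Schur complement $L_{11} - L_{12} L_{22}^{-1} L_{12}^{\T}$ of the symmetric Laplacian $L$). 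In particular $\gamma_{ik} = \gamma_{ki} > 0$ for all indices, so every quotient below is well defined.

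For the forward implication I would assume \eqref{eq:MDf1}--\eqref{eq:MDf3}. Condition \eqref{eq:MDfE1} is literally \eqref{eq:MDf1}. Since $n \ge 3$, pick some $k \in \cG$ with $k \neq i, j$; substituting $\gamma_{ik} = \gamma_{jk}$ from \eqref{eq:MDf1} into \eqref{eq:MDf2} gives $\chi_i / \gamma_{ik} = \chi_j / \gamma_{ik}$, and multiplying by the positive number $\gamma_{ik}$ yields \eqref{eq:MDfE2}, i.e.\ $\chi_i = \chi_j$. Finally, inserting $\chi_i = \chi_j$ and the symmetry relation $\gamma_{ji} = \gamma_{ij}$ into \eqref{eq:MDf3} makes the terms $\chi_i / \gamma_{ij}$ and $\chi_j / \gamma_{ji}$ cancel, leaving $\chi_i / \gamma_{ii} = \chi_i / \gamma_{jj}$, hence \eqref{eq:MDfE3}, i.e.\ $\gamma_{ii} = \gamma_{jj}$.

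For the converse I would assume \eqref{eq:MDfE1}--\eqref{eq:MDfE3}. Then \eqref{eq:MDf1} is \eqref{eq:MDfE1}; \eqref{eq:MDf2} follows from $\chi_i = \chi_j$ together with $\gamma_{ik} = \gamma_{jk}$ for $k \neq i, j$; and \eqref{eq:MDf3} holds because, using $\chi_i = \chi_j$, $\gamma_{ji} = \gamma_{ij}$ (symmetry), and $\gamma_{jj} = \gamma_{ii}$, both sides reduce to $\chi_i / \gamma_{ii} + \chi_i / \gamma_{ij}$. Together with the unchanged scalar conditions and \Cref{thm:Mi=Mj_2}, this proves the Corollary (and thereby part~\ref{itm:Mi=Mj_n>=3} of \Cref{thm:Mi=Mj}).

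There is essentially no hard computation here, but two points deserve care and I would flag them explicitly. First, the hypothesis $n \ge 3$ is used exactly once and is essential: it is what produces an index $k \neq i, j$, so that \eqref{eq:MDf1}--\eqref{eq:MDf2} force $\chi_i = \chi_j$; for $n = 2$ those conditions are empty statements and that conclusion genuinely fails, which is precisely why \Cref{thm:Mi=Mj} splits into two cases. Second, it is the \emph{symmetry} of $\Gamma$, not merely its positive definiteness, that licenses replacing $\gamma_{ji}$ by $\gamma_{ij}$ in \eqref{eq:MDf3}, and I would state this rather than leave it implicit.
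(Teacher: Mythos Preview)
Your argument is correct and mirrors the paper's proof: establish $\chi_i=\chi_j$ from \eqref{eq:MDf1}--\eqref{eq:MDf2} via a third index (using $n\ge 3$), then deduce $\gamma_{ii}=\gamma_{jj}$ from \eqref{eq:MDf3} via the symmetry $\gamma_{ij}=\gamma_{ji}$; the converse is routine and the paper leaves it implicit. One caveat: your final parenthetical, that this ``thereby'' proves part~\ref{itm:Mi=Mj_n>=3} of \Cref{thm:Mi=Mj}, overstates what the Corollary alone delivers---the Theorem is phrased as $\Gamma\in\cS_{ij}$, and to pass between \eqref{eq:MDfE1}--\eqref{eq:MDfE3} and that single condition the paper still invokes \Cref{thm:symmetrical-equivalence} and \Cref{thm:Gamma-symmetrical} (the latter to show $\Gamma\in\cS_{ij}$ already forces $\LD\in\cS_{ij}$, i.e.\ \eqref{eq:MDfE2}).
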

\begin{proof}
  From~\eqref{eq:MDf1} and~\eqref{eq:MDf2} follows~\eqref{eq:MDfE2}, using that
  there are at least three generators.
  Then, from~\eqref{eq:MDf3},~\eqref{eq:MDfE2}, and symmetry
  $\gamma_{ij} = \gamma_{ji}$ follows~\eqref{eq:MDfE3}.
\end{proof}
\Cref{thm:Mi=Mj_n>=3_2}, together with two Lemmas in the Appendix, allows us to
complete the proof of \Cref{thm:Mi=Mj}.
\begin{proof}[{Proof of \Cref{thm:Mi=Mj}, part~\ref{itm:Mi=Mj_n>=3}}]
  Conditions~\eqref{eq:MDfE1} and~\eqref{eq:MDfE3}, by
  \Cref{thm:symmetrical-equivalence}, are equivalent to $\Gamma \in \cS_{ij}$,
  which, by \Cref{thm:Gamma-symmetrical}, is in turn equivalent to
  $\LD \in \cS_{ij}$ and $L_{11} - L_{12} L_{22}^{-1} L_{12}^{\T} \in \cS_{ij}$.
  Therefore,~\eqref{eq:MDfE1} and~\eqref{eq:MDfE3} imply~\eqref{eq:MDfE2}.
\end{proof}

\section{Synchronization of Generator Partition}\label{sec:partition-sync}
Let $\cI = {\{\cI_{\ell}\}}_{\ell \in \hcG}$ be a partition of the set $\cG$,
where $\hcG = \{1, 2, \ldots, \hn\}$ and $\hn \le n$.
In particular, the clusters $\cI_{\ell}$ satisfy
\begin{enumerate}
\item $\cI_{\ell} \neq \emptyset$, for all $\ell \in \hcG$,
\item $\cI_{\ell_1} \cap \cI_{\ell_2} = \emptyset$, for all $\ell_1, \ell_2 \in
  \hcG$ such that $\ell_1 \neq \ell_2$, and
\item $\bigcup_{\ell \in \hcG}{\cI_{\ell}} = \cG$.
\end{enumerate}
Let us denote
\begin{align*}
  \cX_{\cl}
  & :=
    \bigcap_{\ell \in \hcG} \bigcap_{i, j \in \cI_{\ell}} \cX_{ij}, \quad
  \cS_{\cl}
    :=
    \bigcap_{\ell \in \hcG} \bigcap_{i, j \in \cI_{\ell}} \cS_{ij}.
\end{align*}
We define the aggregation matrix as
\begin{align*}
  P =
  \begin{bmatrix}
    e_{\cI_1} \one_{\card{\cI_1}}
    & e_{\cI_2} \one_{\card{\cI_2}}
    & \cdots
    & e_{\cI_{\hn}} \one_{\card{\cI_{\hn}}}
  \end{bmatrix}
  \in \R^{n \times \hn}.
\end{align*}
Notice that $\cX_{\cl} = \im{P}$.

We define two notions generalizing the synchronization of two generators to a
partition of generators.
\begin{definition}
  The system~\eqref{eq:orig} is said to be
  \emph{strongly synchronized with respect to partition $\cI$}
  if the $i$th and $j$th generators are synchronized for all
  $i, j \in \cI_{\ell}$ and all $\ell \in \hcG$, i.e.\ $\delta(t) \in \cX_{ij}$
  and $\bV_{\cG}(t) \in \cX_{ij}$ for all $t \ge 0$ and for any
  $\delta(0), \dot{\delta}(0) \in \cX_{ij}$, $i, j \in \cI_{\ell}$, and
  $\ell \in \hcG$.

  The system~\eqref{eq:orig} is said to be
  \emph{weakly synchronized with respect to partition $\cI$}
  if, for arbitrary $\delta(0), \dot{\delta}(0) \in \cX_{\cl}$, there exist
  functions $\hdelta : [0, \infty) \to \R^{\hn}$ and
  $\hbV_{\hcG} : [0, \infty) \to \C^{\hn}$ such that $\delta(t) = P \hdelta(t)$
  and $\bV_{\cG}(t) = P \hbV_{\hcG}(t)$, i.e.\ $\delta(t) \in \cX_{\cl}$ and
  $\bV_{\cG}(t) \in \cX_{\cl}$ for all $t \ge 0$ and for any
  $\delta(0), \dot{\delta}(0) \in \cX_{\cl}$.
\end{definition}
\begin{remark}
  Notice that strong synchronization is equivalent to
  $\cX_{ij} \times \cX_{ij} \times \cX_{ij}$ being an invariant set for
  $(\delta, \dot{\delta}, \hbV_{\hcG})$ for any $i, j \in \cI_{\ell}$ and
  $\ell \in \hcG$, while weak synchronization is equivalent to an invariant set
  being $\cX_{\cl} \times \cX_{\cl} \times \cX_{\cl}$.
  This means that, if the power system is strongly synchronized, when two
  generators and their buses in the same cluster have equal state, they will
  remain equal.
  If the power system is weakly synchronized, then when the states of every
  generator and its bus are equal to all others in the same cluster, they will
  stay equal.
  From this, we see that that if the system~\eqref{eq:orig} is strongly
  synchronized with respect to $\cI$, then it is also weakly synchronized with
  respect to $\cI$, since $\cX_{\cl} \times \cX_{\cl} \times \cX_{\cl} \subseteq
  \cX_{ij} \times \cX_{ij} \times \cX_{ij}$, for all $i, j \in \cI_{\ell}$ and
  all $\ell \in \hcG$.

  Further, the $i$th and $j$th generators are synchronized if and only
  if~\eqref{eq:orig} is either strongly or weakly synchronized with respect to
  $\{\{i, j\}\} \cup \{\{k\} : k \neq i, j\}$.

  Finally, notice that~\eqref{eq:orig} is always both strongly and weakly
  synchronized with respect to $\{\{i\} : i \in \cG\}$.
\end{remark}
In the following, we show necessary and sufficient conditions for the two
synchronization notions.
To start, in the next Proposition, we present cases when the structure of
$\Gamma$ has no influence.
It also illustrates the relation between strong and weak synchronization.
\begin{proposition}
  Let $\cI = \{\cG\}$, $M, D \in \cS_{\cl}$, and $f, E \in \cX_{\cl}$.
  Then the system~\eqref{eq:orig} is weakly synchronized with respect to
  $\{\cG\}$.
  If additionally $n = 2$, then~\eqref{eq:orig} is also strongly synchronized
  with respect to $\{\cG\}$.
\end{proposition}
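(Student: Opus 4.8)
The plan is to exploit the fact that for the trivial partition $\cI = \{\cG\}$ everything collapses to scalar bookkeeping. First I would record what the hypotheses mean: here $\hn = 1$ and $P = \one_n$, so $\cX_{\cl} = \im{P}$ is the consensus line $\{c\one_n : c \in \R\}$, while $\cS_{\cl}$ is the set of matrices commuting with every coordinate permutation; for the diagonal matrices $M$ and $D$ this forces $M_1 = \cdots = M_n =: m > 0$ and $D_1 = \cdots = D_n =: d \ge 0$, and $f, E \in \cX_{\cl}$ gives $f = \bar f\one_n$, $E = \bar E\one_n$ for scalars $\bar f$, $\bar E$.

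For weak synchronization I would fix $\delta(0) = \delta_0\one_n$ and $\dot\delta(0) = v_0\one_n$ in $\cX_{\cl}$ and test the ansatz $\delta(t) = \phi(t)\one_n$, where $\phi$ is the unique global solution of the scalar linear ODE $m\ddot\phi + d\dot\phi = \bar f$ with $\phi(0) = \delta_0$ and $\dot\phi(0) = v_0$. Substituting $\delta = \phi\one_n$ into the Kron-reduced dynamics~\eqref{eq:kron-delta}, every coupling term $\sin(\delta_i - \delta_k)$ vanishes, so the $i$th equation reduces to exactly $m\ddot\phi + d\dot\phi = \bar f$; hence $\phi\one_n$ solves~\eqref{eq:kron-delta} with the prescribed initial data. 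Since the right-hand side of~\eqref{eq:kron-delta}, read as a first-order system in $(\delta, \dot\delta)$, is globally Lipschitz (bounded smooth sinusoids plus a linear damping term), solutions are unique, so the genuine trajectory is $\delta(t) = \phi(t)\one_n = P\hdelta(t)$ with $\hdelta := \phi$, giving $\delta(t) \in \cX_{\cl}$ for all $t \ge 0$.

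Next I would handle the bus voltages via~\eqref{eq:V=XE}, namely $\bV_{\cG}(t) = X\bE_{\cG}(t)$. With $E = \bar E\one_n$ and $\delta(t) = \phi(t)\one_n$ we have $\bE_{\cG}(t) = \bar E(\cos\phi(t) + \imag\sin\phi(t))\one_n$, a scalar multiple of $\one_n$, so it only remains to note $X\one_n = \one_n$ (this is \Cref{thm:X1=1}; alternatively, $L\one_{n+\on} = 0$ gives $L_{11}\one_n + L_{12}\one_{\on} = 0$ and $L_{12}^{\T}\one_n + L_{22}\one_{\on} = 0$, whence $(L_{11} - L_{12}L_{22}^{-1}L_{12}^{\T})\one_n = 0$ and therefore $(\LD + L_{11} - L_{12}L_{22}^{-1}L_{12}^{\T})\one_n = \LD\one_n$, i.e.\ $X\one_n = \one_n$). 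Thus $\bV_{\cG}(t) = \bar E(\cos\phi(t) + \imag\sin\phi(t))\one_n = P\hbV_{\hcG}(t)$ with $\hbV_{\hcG}(t) := \bar E(\cos\phi(t) + \imag\sin\phi(t))$, so $\bV_{\cG}(t) \in \cX_{\cl}$ for all $t \ge 0$, establishing weak synchronization with respect to $\{\cG\}$.

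Finally, for $n = 2$ the partition $\{\cG\}$ is $\{\{1, 2\}\}$, so strong synchronization with respect to $\{\cG\}$ is, by definition, synchronization of the two generators; the hypotheses give $M_1 = M_2$, $D_1 = D_2$, $f_1 = f_2$, and $E_1 = E_2$, so the claim follows from \Cref{thm:Mi=Mj}, part~\ref{itm:Mi=Mj_n=2}. I expect the only real subtlety to be the bus-voltage step, i.e.\ making sure $X\one_n = \one_n$ so that $\bV_{\cG}$ stays on the consensus line, together with remarking that the strong conclusion genuinely needs $n = 2$: for $n \ge 3$, part~\ref{itm:Mi=Mj_n>=3} of \Cref{thm:Mi=Mj} additionally requires $\Gamma \in \cS_{ij}$, a structural condition on the reactance network not forced by symmetry of $M$, $D$, $f$, and $E$ alone.
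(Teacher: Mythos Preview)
Your proof is correct and follows essentially the same route as the paper: you unwind the hypotheses to scalar data, show the consensus line is invariant for the dynamics (you via an explicit ansatz $\delta(t)=\phi(t)\one_n$ plus uniqueness, the paper via the equivalent observation that $\delta,\dot\delta\in\im\one$ forces $\ddot\delta\in\im\one$), then use $X\one_n=\one_n$ (equivalently $\LD^{-1}\Gamma\one_n=\one_n$) to keep $\bV_{\cG}$ on the consensus line, and finally invoke \Cref{thm:Mi=Mj}, part~\ref{itm:Mi=Mj_n=2} for the $n=2$ claim. Your added remarks on global Lipschitzness/uniqueness and on why the strong conclusion cannot extend to $n\ge 3$ are correct and go slightly beyond what the paper records.
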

\begin{proof}
  From the assumptions, it follows that $M = \widehat{m} I$,
  $D = \widehat{d} I$, $f = \hf \one$, and $E = \hE \one$, for some
  $\widehat{m} > 0$, $\widehat{d} \ge 0$, and $\hf, \hE \in \R$.
  Notice that for $\cI = \{\cG\}$, we have $P = \one$.

  Let us assume that $\delta(0), \dot{\delta}(0) \in \im{\one}$.
  To prove weak synchronization, we need to show that $\delta(t) \in \im{\one}$
  and $\bV_{\cG}(t) \in \im{\one}$.
  For the former, it is enough to show that $\ddot{\delta}(t) \in \im{\one}$ if
  $\delta(t), \dot{\delta}(t) \in \im{\one}$, which is clear, since then
  $\ddot{\delta}(t)
  = -M^{-1} D \dot{\delta}(t) + M^{-1} f
  = -\frac{\widehat{d}}{\widehat{m}} \dot{\delta}(t)
  + \frac{\hf}{\widehat{m}} \one$.
  For the latter, we see that
  $\bV_{\cG} = \LD^{-1} \Gamma \bE_{\cG} \in \im{\one}$ whenever
  $\bE_{\cG} \in \im{\one}$, which is equivalent to $\delta \in \im{\one}$.

  The second part follows from part~\ref{itm:Mi=Mj_n=2} of \Cref{thm:Mi=Mj}.
\end{proof}
We continue with the first main result of this Section---the necessary and
sufficient conditions for strong synchronization.
Here, symmetrical conditions for $\Gamma$ are relevant.
\begin{theorem}
  Let $n \ge 3$, $\cI$ arbitrary, and $M \in \cS_{\cl}$.
  Then the system~\eqref{eq:orig} is strongly synchronized with respect to $\cI$
  if and only if $D \in \cS_{\cl}$, $f \in \cX_{\cl}$, $E \in \cX_{\cl}$, and
  $\Gamma \in \cS_{\cl}$.
\end{theorem}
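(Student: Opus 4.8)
The plan is to reduce the statement to the pairwise characterization in \Cref{thm:Mi=Mj}, part~\ref{itm:Mi=Mj_n>=3}. Indeed, strong synchronization with respect to $\cI$ is, by definition, nothing more than the requirement that the $i$th and $j$th generators be synchronized for every $\ell \in \hcG$ and every pair $i, j \in \cI_\ell$. The only auxiliary observation needed is that $\cX_{\cl}$ and $\cS_{\cl}$ are intersections taken over exactly those pairs $i, j$ that lie in a common cluster, so membership in $\cX_{\cl}$ (resp.\ $\cS_{\cl}$) is logically equivalent to simultaneous membership in $\cX_{ij}$ (resp.\ $\cS_{ij}$) for all such pairs. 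In particular, the hypothesis $M \in \cS_{\cl}$ means $M \in \cS_{ij}$ for every $\ell$ and every $i, j \in \cI_\ell$.

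For the ``if'' direction, I would assume $D \in \cS_{\cl}$, $f \in \cX_{\cl}$, $E \in \cX_{\cl}$, and $\Gamma \in \cS_{\cl}$. Fixing an arbitrary $\ell \in \hcG$ and $i, j \in \cI_\ell$, the observation above gives $M, D \in \cS_{ij}$, $f, E \in \cX_{ij}$, and $\Gamma \in \cS_{ij}$, so \Cref{thm:Mi=Mj}, part~\ref{itm:Mi=Mj_n>=3} (applicable since $n \ge 3$) yields that the $i$th and $j$th generators are synchronized. As the pair was arbitrary among those in a common cluster, \eqref{eq:orig} is strongly synchronized with respect to $\cI$.

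For the ``only if'' direction, I would run the same chain in reverse: strong synchronization yields, for every $\ell$ and every $i, j \in \cI_\ell$, that the $i$th and $j$th generators are synchronized; since $n \ge 3$ and $M \in \cS_{ij}$, \Cref{thm:Mi=Mj}, part~\ref{itm:Mi=Mj_n>=3} then forces $D \in \cS_{ij}$, $f \in \cX_{ij}$, $E \in \cX_{ij}$, and $\Gamma \in \cS_{ij}$. Intersecting these over all $\ell \in \hcG$ and all $i, j \in \cI_\ell$ gives $D \in \cS_{\cl}$, $f \in \cX_{\cl}$, $E \in \cX_{\cl}$, and $\Gamma \in \cS_{\cl}$.

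There is no real obstacle in this argument: all of the analytic content is already carried by \Cref{thm:Mi=Mj}, and what remains is the bookkeeping identification of the ``cluster-wise'' conditions with the family of ``pairwise'' conditions indexed by pairs inside common clusters. If one wanted a fully self-contained write-up, the single place to be careful is that the definition of strong synchronization quantifies over the same index set of pairs as the definitions of $\cX_{\cl}$ and $\cS_{\cl}$; once this is spelled out, both implications are immediate.
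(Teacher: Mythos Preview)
Your proposal is correct and follows essentially the same approach as the paper: the paper's own proof is the single sentence ``It follows from applying part~\ref{itm:Mi=Mj_n>=3} of \Cref{thm:Mi=Mj} for every $i$th and $j$th generator where $i, j \in \cI_{\ell}$ and $\ell \in \hcG$,'' and you have simply made the bookkeeping behind that sentence explicit.
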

\begin{proof}
  It follows from applying part~\ref{itm:Mi=Mj_n>=3} of \Cref{thm:Mi=Mj} for
  every $i$th and $j$th generator where $i, j \in \cI_{\ell}$ and
  $\ell \in \hcG$.
\end{proof}
We conclude this Section with the second main result---the necessary and
sufficient conditions for weak synchronization.
Instead of symmetrical conditions, $\cX_{\cl}$ being $\Gamma$-invariant is one
of the conditions.
Since $\cX_{\cl} = \im{P}$, this actually means that $\cI$ is an equitable
partition for a graph whose adjacency matrix is $\Gamma$~\cite{GodR01}.
\begin{theorem}\label{thm:weak-sync}
  Let $\card{\cI} \ge 2$, $M, D \in \cS_{\cl}$, and $f, E \in \cX_{\cl}$.
  Then the system~\eqref{eq:orig} is weakly synchronized with respect to $\cI$
  if and only if
  \begin{align}
    \label{eq:weak-sync-cond}
    \LD \in \cS_{\cl} \quad
    \text{ and } \quad
    \cX_{\cl} \text{ is } \Gamma\text{-invariant}.
  \end{align}
\end{theorem}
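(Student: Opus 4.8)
The plan is to re-express weak synchronization as two invariance conditions and then to extract \eqref{eq:weak-sync-cond} using $X\one = \one$ (\Cref{thm:X1=1}) and the strict positivity of the entries of $\Gamma$. I will use throughout that $\cX_{\cl} = \im{P}$ with $P$ of full column rank, that the diagonal $M, D \in \cS_{\cl}$ give diagonal $\hM \succ 0$, $\hD \succeq 0$ with $MP = P\hM$, $DP = P\hD$, that $f = P\hf$, $E = P\hE$, and — when $\LD \in \cS_{\cl}$ — that $\LD P = P\hLD$ with diagonal $\hLD \succ 0$, while $\Gamma$-invariance of $\cX_{\cl}$ means $\Gamma P = PR$ for some $R \in \R^{\hn \times \hn}$. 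Write $c(i) \in \hcG$ for the index with $i \in \cI_{c(i)}$.

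For sufficiency, assume \eqref{eq:weak-sync-cond}. Substituting $\delta = P\hdelta$ into the coupling term of \eqref{eq:kron-delta}, its $i$th entry becomes $\hE_{c(i)}\sum_{m}\hE_{m}R_{c(i)m}\sin(\hdelta_{c(i)} - \hdelta_{m})$, which depends on $i$ only through $c(i)$; hence the whole right-hand side $f - D\dot\delta - (\text{coupling})$ of \eqref{eq:kron-delta} lies in $\im{P}$ whenever $\delta, \dot\delta \in \im{P}$, so the vector field $(\dot\delta, \ddot\delta)$ maps $\im{P} \times \im{P}$ into itself. By smoothness and uniqueness of solutions, every trajectory started in $\cX_{\cl} \times \cX_{\cl}$ stays there; concretely, letting $\hdelta$ solve the reduced second-order ODE with $P\hdelta(0) = \delta(0)$, $P\dot{\hdelta}(0) = \dot\delta(0)$ gives $\delta = P\hdelta$. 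Finally $\delta \in \im{P}$ forces $\bE_{\cG} = P\hbE_{\hcG}$, whence $\bV_{\cG} = \LD^{-1}\Gamma\bE_{\cG} = P\myparen{\hLD^{-1}R\hbE_{\hcG}} \in \im{P}$.

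For necessity, I would fix an arbitrary $\hdelta \in \R^{\hn}$ and run \eqref{eq:orig} from $\delta(0) = P\hdelta$, $\dot\delta(0) = 0$. Since $\cX_{\cl}$ is a closed subspace containing the trajectory, $\ddot\delta(0) \in \cX_{\cl}$, and evaluating \eqref{eq:kron-delta} at $t = 0$ with $M \in \cS_{\cl}$, $f \in \cX_{\cl}$ shows the coupling term at $P\hdelta$ lies in $\cX_{\cl}$; comparing its $i$th and $j$th entries for $i, j \in \cI_{\ell}$, choosing $\hdelta$ with $\hdelta_{\ell} - \hdelta_{m} = \pi/2$ for a single $m \neq \ell$ (here $\card{\cI} \ge 2$ enters) and $\hdelta_{\ell} - \hdelta_{m'} = 0$ otherwise, and using $\hE_{m} > 0$ yields $\sum_{k \in \cI_{m}}[\Gamma]_{ik} = \sum_{k \in \cI_{m}}[\Gamma]_{jk}$ for all $i, j \in \cI_{\ell}$, $m \neq \ell$. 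Similarly $\bV_{\cG}(0) \in \cX_{\cl}$ and \eqref{eq:kron-V} give $\LD^{-1}\Gamma\bE_{\cG}(0) \in \cX_{\cl}$ for $\bE_{\cG}(0) = \mybrack{\hE_{m}(\cos\hdelta_{m} + \imag\sin\hdelta_{m})}_{m}$ with arbitrary phases and positive amplitudes, equivalently $\LD^{-1}\Gamma\,\im{P} \subseteq \im{P}$, i.e.\ $\chi_{i}\sum_{k \in \cI_{m}}[\Gamma]_{ik} = \chi_{j}\sum_{k \in \cI_{m}}[\Gamma]_{jk}$ for $i, j \in \cI_{\ell}$ and all $m$.

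The decisive step, which I expect to be the main obstacle, is to combine these: for $m \neq \ell$ the first relation lets me cancel the strictly positive common factor $\sum_{k \in \cI_{m}}[\Gamma]_{ik}$ from the second, forcing $\chi_{i} = \chi_{j}$ for all $i, j \in \cI_{\ell}$, i.e.\ $\LD \in \cS_{\cl}$. Then $X\one = \one$ (\Cref{thm:X1=1}) gives $\sum_{k=1}^{n}[\Gamma]_{ik} = \chi_{i}^{-1} = \chi_{j}^{-1} = \sum_{k=1}^{n}[\Gamma]_{jk}$, and subtracting the already-equal partial sums over $\cI_{m}$, $m \neq \ell$, leaves $\sum_{k \in \cI_{\ell}}[\Gamma]_{ik} = \sum_{k \in \cI_{\ell}}[\Gamma]_{jk}$; hence every column of $\Gamma P$ lies in $\im{P}$, so $\cX_{\cl}$ is $\Gamma$-invariant, and \eqref{eq:weak-sync-cond} is proved.
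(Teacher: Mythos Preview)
Your argument is correct and follows essentially the same route as the paper: both directions reduce weak synchronization to the two invariance conditions on the coupling term and on $\LD^{-1}\Gamma$, and in the necessity direction you pick the same test phases to extract the off-cluster row-sum identities, then use $\card{\cI}\ge 2$ together with strict positivity of the entries of $\Gamma$ to force $\chi_i=\chi_j$. The only cosmetic difference is the last step: the paper reads off the within-cluster identity $\sum_{k\in\cI_\ell}[\Gamma]_{ik}=\sum_{k\in\cI_\ell}[\Gamma]_{jk}$ directly from the $\LD^{-1}\Gamma$-invariance relation at $m=\ell$ once $\chi_i=\chi_j$, whereas you recover it via $X\one=\one$ and subtraction---both are equally short and valid.
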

\begin{proof}
  From the definition, we see that~\eqref{eq:orig} is weakly synchronized with
  respect to $\cI$ if and only if
  \begin{align}\label{eq:weak-sync-a}
    \begin{aligned}
      &
        \myparen{\forall \delta, \dot{\delta} \in \cX_{\cl}}\
        M^{-1} \Bigl(-D \dot{\delta} + f \\
      & \qquad
        - \myparen{\Gamma \circ E E^{\T}
          \circ \sin\!\myparen{\delta \one_{n}^{\T}
            - \one_{n} \delta^{\T}}} \one_{n}\Bigr)
        \in \cX_{\cl}
    \end{aligned}
  \end{align}
  and
  \begin{align}\label{eq:weak-sync-b}
    \myparen{\forall \delta \in \cX_{\cl}}\
    \LD^{-1} \Gamma \bE_{\cG} \in \cX_{\cl}.
  \end{align}
  Since $M, D \in \cS_{\cl}$ and $f \in \cX_{\cl}$,
  condition~\eqref{eq:weak-sync-a} is equivalent to
  \begin{align*}
    \myparen{\forall \delta \in \cX_{\cl}}\
    \myparen{\Gamma \circ E E^{\T}
      \circ \sin\!\myparen{\delta \one_{n}^{\T}
        - \one_{n} \delta^{\T}}} \one_{n}
    \in \cX_{\cl}.
  \end{align*}
  Using $\delta = P \hdelta$, $E = P \hE$, $\one_{n} = P \one_{\hn}$, and that
  $v \in \cX_{\cl}$ is equivalent to $\Pi_{ij} v = v$ for all
  $i, j \in \cI_{\ell}$ and $\ell \in \hcG$, we find that the above condition is
  equivalent to
  \begin{align}
    \nonumber
    &
      \myparen{\forall \hdelta \in \R^{\hn}}\!
      \myparen{\forall \ell \in \hcG}\!
      \myparen{\forall i, j \in \cI_{\ell}} \\
    \nonumber
    & \quad
      \myparen{(\Gamma P - \Pi_{ij} \Gamma P)
        \circ P \hE \hE^{\T}
        \circ P \sin\!\myparen{\hdelta \one_{\hn}^{\T}
          - \one_{\hn} \hdelta^{\T}}} \\
    \label{eq:weak-sync-1}
    & \qquad
      \times \one_{\hn} = 0.
  \end{align}
  In a similar way, we find the condition~\eqref{eq:weak-sync-b} is equivalent
  to
  \begin{align*}
    &
      \myparen{\forall \hbE_{\hcG} \in \C^{\hn}}\!
      \myparen{\forall \ell \in \hcG}\!
      \myparen{\forall i, j \in \cI_{\ell}} \\
    & \qquad\qquad
      \LD^{-1} \Gamma P \hbE_{\hcG}
      = \Pi_{ij} \LD^{-1} \Gamma P \hbE_{\hcG},
  \end{align*}
  or, more simply,
  \begin{align}
    \label{eq:weak-sync-2}
    \myparen{\forall \ell \in \hcG}\!
    \myparen{\forall i, j \in \cI_{\ell}}\
    \LD^{-1} \Gamma P
    = \Pi_{ij} \LD^{-1} \Gamma P.
  \end{align}

  It is straightforward to check that~\eqref{eq:weak-sync-cond}
  implies~\eqref{eq:weak-sync-1} and~\eqref{eq:weak-sync-2}.
  For the other direction, choosing $\hdelta = e_{\ell_2}$ for
  $\ell_2 \neq \ell$ in~\eqref{eq:weak-sync-1}, we find from the $i$th row that
  \begin{align}
    \label{eq:wscil1}
    \myparen{\forall \ell, \ell_2 \in \hcG, \ell_2 \neq \ell}\!
    \myparen{\forall i, j \in \cI_{\ell}}\
    \sum_{k \in \cI_{\ell_2}} \frac{1}{\gamma_{i k}}
    = \sum_{k \in \cI_{\ell_2}} \frac{1}{\gamma_{j k}}.
  \end{align}
  The $i$th row and $\ell_2$th column in condition~\eqref{eq:weak-sync-2} gives
  \begin{align}
    \label{eq:wscil2}
    \myparen{\forall \ell, \ell_2 \in \hcG}\!
    \myparen{\forall i, j \in \cI_{\ell}}\
    \chi_{i} \sum_{k \in \cI_{\ell_2}} \frac{1}{\gamma_{i k}}
    = \chi_{j} \sum_{k \in \cI_{\ell_2}} \frac{1}{\gamma_{j k}}.
  \end{align}
  Since the assumption is that there are at least two clusters in $\cI$,
  from~\eqref{eq:wscil1} and~\eqref{eq:wscil2} we find that
  $\chi_{i} = \chi_{j}$, for all $i, j \in \cI_{\ell}$ and all $\ell \in \hcG$,
  i.e., $\LD \in \cS_{\cl}$.
  This, together with~\eqref{eq:wscil2}, gives
  \begin{align*}
    \myparen{\forall \ell, \ell_2 \in \hcG}\!
    \myparen{\forall i, j \in \cI_{\ell}}\
    \sum_{k \in \cI_{\ell_2}} \frac{1}{\gamma_{i k}}
    = \sum_{k \in \cI_{\ell_2}} \frac{1}{\gamma_{j k}},
  \end{align*}
  which is equivalent to $\im(\Gamma P) \subseteq \cX_{\cl}$, i.e.\
  $\Gamma \cX_{\cl} \subseteq \cX_{\cl}$.
\end{proof}

\section{Aggregation of Power Systems}\label{sec:aggregation}
Let us assume that the system~\eqref{eq:orig} is weakly synchronized with
respect to a partition $\cI$.
Let also the initial condition satisfy
$\delta(0), \dot{\delta}(0) \in \cX_{\cl}$.
Then there exist $\hdelta$ and $\hbV_{\hcG}$ such that
$\delta(t) = P \hdelta(t)$ and $\bV_{\cG}(t) = P \hbV_{\hcG}(t)$, which also
gives us $V_{\cG}(t) = P \hV_{\hcG}(t)$ and
$\theta_{\cG}(t) = P \htheta_{\hcG}(t)$.
Inserting this into~\eqref{eq:orig} with dynamics rewritten as
in~\eqref{eq:orig-rewrite}, we find
\begin{gather*}
  \begin{aligned}
    &
      M P \ddot{\hdelta}(t)
      + D P \dot{\hdelta}(t) \\
    & =
      f
      - \LD \myparen{E \circ P \hV_{\hcG}(t)
        \circ \sin\!\myparen{P \hdelta(t) - P \htheta_{\hcG}(t)}},
  \end{aligned} \\
  \begin{bmatrix}
    \LD \myparen{\bE_{\cG}(t) - P \hbV_{\hcG}(t)} \\
    0
  \end{bmatrix}
  =
    \begin{bmatrix}
      L_{11} & L_{12} \\
      L_{12}^{\T} & L_{22}
    \end{bmatrix}
    \begin{bmatrix}
      P \hbV_{\hcG}(t) \\
      \bV_{\ocG}(t)
    \end{bmatrix}.
\end{gather*}
Assuming additionally that $E \in \cX_{\cl}$, i.e.\ $E = P \hE$ for some
$\hE \in \R^{\hn}$, and pre-multiplying the above dynamics and first block-row
of the constraint by $P^{\T}$, we obtain
\begin{subequations}\label{eq:orig-agg}
  \begin{gather}
    \begin{aligned}
      &
        \hM \ddot{\hdelta}(t)
        + \hD \dot{\hdelta}(t) \\
      & =
        \hf - \hLD \myparen{\hE \circ \hV_{\hcG}(t)
          \circ \sin\!\myparen{\hdelta(t) - \htheta_{\hcG}(t)}},
    \end{aligned} \\
    \begin{bmatrix}
      \hLD \myparen{\hbE_{\hcG}(t) - \hbV_{\hcG}(t)} \\
      0
    \end{bmatrix}
    =
      \begin{bmatrix}
        \hL_{11} & \hL_{12} \\
        \hL_{12}^{\T} & L_{22}
      \end{bmatrix}
      \begin{bmatrix}
        \hbV_{\hcG}(t) \\
        \bV_{\ocG}(t)
      \end{bmatrix},
  \end{gather}
\end{subequations}
where $\hM = P^{\T} M P$, $\hD = P^{\T} D P$, $\hf = P^{\T} f$,
$\hLD = P^{\T} \LD P$, $\hL_{11} = P^{\T} L_{11} P$, $\hL_{12} = P^{\T} L_{12}$.
Moreover, from $\delta(t) = P \hdelta(t)$ follows that
$\hdelta(0) = {(P^{\T} P)}^{-1} P^{\T} \delta(0)$ and
$\dot{\hdelta}(0) = {(P^{\T} P)}^{-1} P^{\T} \dot{\delta}(0)$.

Notice that the reduced model~\eqref{eq:orig-agg} is again a power system of the
same form as~\eqref{eq:orig}.
In particular, we have that $\hM$, $\hD$, and $\hLD$ are positive definite
diagonal matrices and that $\hL$ is a Laplacian matrix.
Additionally, note that this projection-based reduction can be done for
arbitrary power system and arbitrary partition.
In general, we can take~\eqref{eq:orig-agg} with
$\hdelta(0) = {(P^{\T} P)}^{-1} P^{\T} \delta(0)$,
$\dot{\hdelta}(0) = {(P^{\T} P)}^{-1} P^{\T} \dot{\delta}(0)$, and
$\hE = {(P^{\T} P)}^{-1} P^{\T} E$.
We can also apply Kron reduction to this reduced model.

\section{Illustrative Example}\label{sec:example}
For the example in \Cref{fig:power-system}, let $\chi_i = 1$ and $\chi_{ij} = 1$
for all $i, j$.
Then we have
\begin{align*}
  \Gamma
  & =
    \frac{1}{32}
    \begin{bsmallmatrix*}[r]
      21 & 5 & 2 & 2 & 2 \\
      5 & 21 & 2 & 2 & 2 \\
      2 & 2 & 16 & 8 & 4 \\
      2 & 2 & 8 & 16 & 4 \\
      2 & 2 & 4 & 4 & 20
    \end{bsmallmatrix*}.
\end{align*}
Additionally, let $M = D = I_5$, $f = 0$, and $E = \one_{5}$.
Then, using \Cref{thm:Mi=Mj}, we see that the first and second generators are
synchronized, and that the same is true for the third and fourth.
By definition, this implies that the system is strongly synchronized with
respect to $\{\{1, 2\}, \{3, 4\}, \{5\}\}$.
On the other hand, from \Cref{thm:weak-sync} and
\begin{align*}
  \Gamma
  \begin{bsmallmatrix*}[r]
    1 & 0 \\
    1 & 0 \\
    0 & 1 \\
    0 & 1 \\
    0 & 1
  \end{bsmallmatrix*}
  & =
    \frac{1}{16}
    \begin{bsmallmatrix*}[r]
      13 & 3 \\
      13 & 3 \\
      2 & 14 \\
      2 & 14 \\
      2 & 14
    \end{bsmallmatrix*}
    =
    \begin{bsmallmatrix*}[r]
      1 & 0 \\
      1 & 0 \\
      0 & 1 \\
      0 & 1 \\
      0 & 1
    \end{bsmallmatrix*}
    \myparen{%
    \frac{1}{16}
    \begin{bsmallmatrix*}[r]
      13 & 3 \\
      2 & 14
    \end{bsmallmatrix*}},
\end{align*}
we see that the  system is weakly synchronized with respect to
$\{\{1, 2\}, \{3, 4, 5\}\}$, but not strongly.
Using the partition $\cI = \{\{1, 2\}, \{3, 4, 5\}\}$ for aggregation, we find
the following reduced quantities:
$\hM = \hD =
\begin{bsmallmatrix}
  2 & 0 \\
  0 & 3
\end{bsmallmatrix}$,
$\hf = 0$, $\hE = \one_{2}$,
$\hLD = \hL_{11} =
\begin{bsmallmatrix}
  2 & 0 \\
  0 & 3
\end{bsmallmatrix}$,
$\hL_{12} =
\begin{bsmallmatrix}
  -2 & 0 \\
  0 & -3
\end{bsmallmatrix}$,
$\hGamma =
\frac{1}{8}
\begin{bsmallmatrix}
  13 & 3 \\
  3 & 21
\end{bsmallmatrix}$.
The \Cref{fig:partition} shows the partition and \Cref{fig:aggregated} the
associated reduced power system.
From the definition of weak synchronization, we know that this reduced power
system exactly reproduces the initial value response of the original system for
any initial condition $\delta(0), \dot{\delta}(0) \in \cX_{\cl}$, taking the
initial condition of the reduced model to be
$\hdelta(0) = {(P^{\T} P)}^{-1} P^{\T} \delta(0)$ and
$\dot{\hdelta}(0) = {(P^{\T} P)}^{-1} P^{\T} \dot{\delta}(0)$.
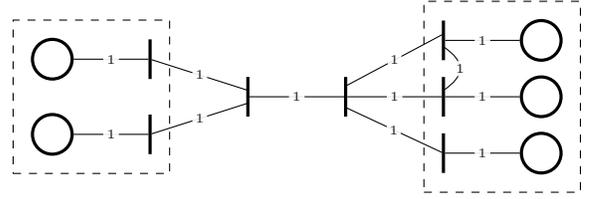
\begin{figure}[tb]
  \centering
  \begin{tikzpicture}[xscale=1.3, yscale=0.5]
    \node (g1) [generator] at (-2, 1) {};
    \node (g2) [generator] at (-2, -1) {};
    \node (g3) [generator] at (3, 1.5) {};
    \node (g4) [generator] at (3, 0) {};
    \node (g5) [generator] at (3, -1.5) {};

    \node (b1) [bus] at (-1, 1) {};
    \node (b2) [bus] at (-1, -1) {};
    \node (b3) [bus] at (2, 1.5) {};
    \node (b4) [bus] at (2, 0) {};
    \node (b5) [bus] at (2, -1.5) {};
    \node (b6) [bus] at (0, 0) {};
    \node (b7) [bus] at (1, 0) {};

    \begin{scope}[every node/.style={
        circle,
        outer sep=0ex,
        inner sep=0.1ex,
        fill=white,
        fill opacity=0.9,
        text opacity=1}]

      \foreach \i in {1, ..., 5} {
        \draw (g\i) edge node {\tiny $1$} (b\i);
      }

      \draw (b1) edge node {\tiny $1$} ($(b6)!0.3!(b6.north)$);
      \draw (b2) edge node {\tiny $1$} ($(b6)!0.3!(b6.south)$);
      \draw ($(b3)!0.3!(b3.south)$) to[bend left=30]
        node {\tiny $1$} ($(b4)!0.3!(b4.north)$);
      \draw ($(b3)!0.3!(b3.north)$) edge
        node {\tiny $1$} ($(b7)!0.5!(b7.north)$);
      \draw (b4) edge node {\tiny $1$} (b7);
      \draw (b5) edge node {\tiny $1$} ($(b7)!0.5!(b7.south)$);
      \draw (b6) edge node {\tiny $1$} (b7);
    \end{scope}

    \draw[dashed] ($(g1) + (-0.4, 1.04)$)
      -- ($(b1) + (0.2, 1.04)$)
      -- ($(b2) + (0.2, -1.04)$)
      -- ($(g2) + (-0.4, -1.04)$)
      -- cycle;
    \draw[dashed] ($(g3) + (0.4, 1.04)$)
      -- ($(b3) + (-0.2, 1.04)$)
      -- ($(b5) + (-0.2, -1.04)$)
      -- ($(g5) + (0.4, -1.04)$)
      -- cycle;
  \end{tikzpicture}
  \caption{Partition $\{\{1, 2\}, \{3, 4, 5\}\}$ applied to the original power
    system in \Cref{fig:power-system} with $\chi_i = \chi_{ij} = 1$ for all
    $i, j$.
  }\label{fig:partition}
\end{figure}
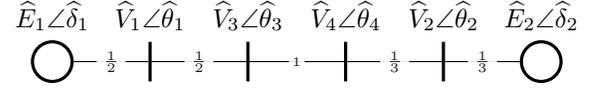
\begin{figure}[tb]
  \centering
  \begin{tikzpicture}[scale=1.3]
    \node[label=above:{$\hE_1 \angle \hdelta_1$}]
      (gr1) [generator] at (-2, 0) {};
    \node[label=above:{$\hE_2 \angle \hdelta_2$}]
      (gr2) [generator] at (3, 0) {};

    \node[label=above:{$\hV_1 \angle \htheta_1$}]
      (br1) [bus] at (-1, 0) {};
    \node[label=above:{$\hV_2 \angle \htheta_2$}]
      (br2) [bus] at (2, 0) {};
    \node[label=above:{$\hV_3 \angle \htheta_3$}]
      (br3) [bus] at (0, 0) {};
    \node[label=above:{$\hV_4 \angle \htheta_4$}]
      (br4) [bus] at (1, 0) {};

    \begin{scope}[every node/.style={
        circle,
        outer sep=0ex,
        inner sep=0.1ex,
        fill=white,
        fill opacity=0.9,
        text opacity=1}]

      \draw (gr1) edge node {\tiny $\frac{1}{2}$} (br1);
      \draw (gr2) edge node {\tiny $\frac{1}{3}$} (br2);

      \draw (br1) edge node {\tiny $\frac{1}{2}$} (br3);
      \draw (br2) edge node {\tiny $\frac{1}{3}$} (br4);
      \draw (br3) edge node {\tiny $1$} (br4);
    \end{scope}
  \end{tikzpicture}
  \caption{Reduced power system obtained by aggregating the system in
    \Cref{fig:partition} with $M = D = I_5$, $f = 0$, and $E = \one_{5}$.
  }\label{fig:aggregated}
\end{figure}

To demonstrate the possibility to aggregate using any partition, including those
with respect to which the power system is not weakly synchronized, and any
initial condition, we show simulation result for partition
$\{\{1, 2, 3\}, \{4, 5\}\}$ in \Cref{fig:simulation}.
We see that, in this case, the reduced model matches the steady state and
approximates the transient behavior.
Finding sufficient conditions for matching the steady state and deriving error
bounds is a possible topic of future research.
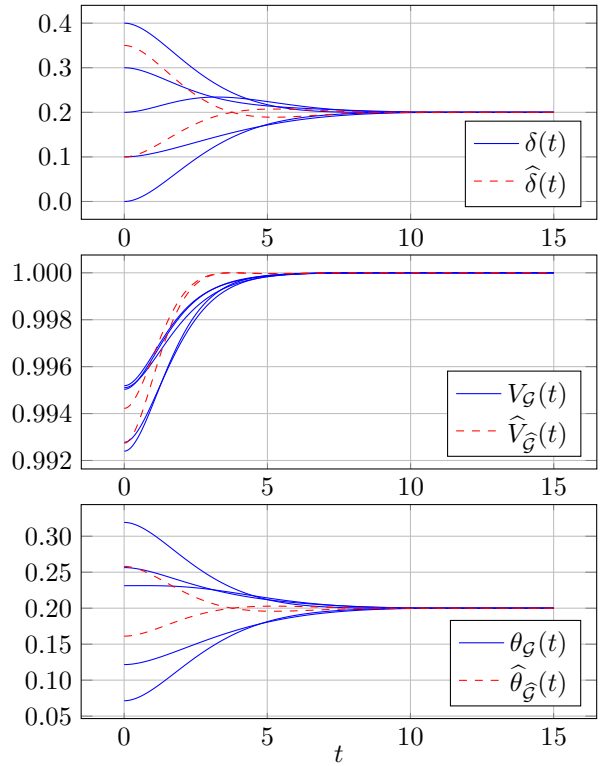
\begin{figure}[tb]
  \centering
  \pgfplotstableset{
    create on use/t/.style={
      create col/copy column from table={data/t.txt}{0}
    }
  }
  \begin{tikzpicture}
    \begin{axis}[
        name=delta,
        yscale=0.5,
        y tick label style={
          /pgf/number format/.cd,
          fixed,
          fixed zerofill,
          precision=1,
          /tikz/.cd
        },
        legend pos=south east,
        legend entries={$\delta(t)$,,,,,$\hdelta(t)$,},
        grid
      ]
      \foreach \i in {0, ..., 4} {
        \addplot[blue, no markers] table [x=t, y index=\i] {data/delta.txt};
      }
      \foreach \i in {0, 1} {
        \addplot[red, no markers, dashed] table [x=t, y index=\i] {data/hdelta.txt};
      }
    \end{axis}

    \begin{axis}[
        name=V,
        at=(delta.below south west),
        anchor=above north west,
        yscale=0.5,
        y tick label style={
          /pgf/number format/.cd,
          fixed,
          fixed zerofill,
          precision=3,
          /tikz/.cd
        },
        legend pos=south east,
        legend entries={$V_{\cG}(t)$,,,,,$\hV_{\hcG}(t)$,},
        grid
      ]
      \foreach \i in {0, ..., 4} {
        \addplot[blue, no markers] table [x=t, y index=\i] {data/V.txt};
      }
      \foreach \i in {0, 1} {
        \addplot[red, no markers, dashed] table [x=t, y index=\i] {data/hV.txt};
      }
    \end{axis}

    \begin{axis}[
        name=theta,
        at=(V.below south west),
        anchor=above north west,
        yscale=0.5,
        y tick label style={
          /pgf/number format/.cd,
          fixed,
          fixed zerofill,
          precision=2,
          /tikz/.cd
        },
        xlabel={$t$},
        legend pos=south east,
        legend entries={$\theta_{\cG}(t)$,,,,,$\htheta_{\hcG}(t)$,},
        grid
      ]
      \foreach \i in {0, ..., 4} {
        \addplot[blue, no markers] table [x=t, y index=\i] {data/theta.txt};
      }
      \foreach \i in {0, 1} {
        \addplot[red, no markers, dashed] table [x=t, y index=\i] {data/htheta.txt};
      }
    \end{axis}
  \end{tikzpicture}
  \caption{Initial value response of the original power system from
    \Cref{fig:power-system} and a reduced system obtained by aggregating with
    partition $\{\{1, 2, 3\}, \{4, 5\}\}$.
    Original system's parameters are $\chi_i = \chi_{ij} = 1$ for all $i, j$,
    $M = D = I_5$, $f = 0$, and $E = \one_5$.
    The initial value is $\delta(0) = (0, 0.1, 0.2, 0.3, 0.4)$ and
    $\dot{\delta}(0) = 0$.
  }\label{fig:simulation}
\end{figure}

\section{Conclusions}
We analyzed power systems consisting of generators and buses.
We introduced a notion of synchronization for a pair of generators and two for
a partition of the set of generators.
We proved equivalent conditions depending on the Kron-reduced system being
symmetrical or equitable.
This additionally gives a relation between symmetrical matrices and equitable
partitions.
We showed how a synchronized power systems can be exactly approximated with a
reduced system by aggregating generators and their buses.
Furthermore, this provides an aggregation-based reduction method for arbitrary
power systems, although finding bounds for the approximation error remains an
open problem.

\section*{Appendix}
\begin{lemma}\label{thm:X1=1}
  For $X$ as in~\eqref{eq:X-def}, we have $X \one = \one$.
\end{lemma}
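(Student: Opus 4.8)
The plan is to exploit the fact that $L$, being a graph Laplacian, annihilates the all-ones vector. Writing $L \one = 0$ in the block form used in~\eqref{eq:baleq}, namely
\begin{align*}
  \begin{bmatrix} L_{11} & L_{12} \\ L_{12}^{\T} & L_{22} \end{bmatrix}
  \begin{bmatrix} \one_n \\ \one_{\on} \end{bmatrix}
  = 0,
\end{align*}
yields the two identities $L_{12}^{\T} \one_n + L_{22} \one_{\on} = 0$ and $L_{11} \one_n + L_{12} \one_{\on} = 0$. From the first of these, using that $L_{22}$ is invertible (it is a principal submatrix of the irreducible Laplacian $L$, hence positive definite, as already relied upon in deriving~\eqref{eq:V=XE}), we obtain $\one_{\on} = -L_{22}^{-1} L_{12}^{\T} \one_n$.

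Substituting this into the second identity gives $\myparen{L_{11} - L_{12} L_{22}^{-1} L_{12}^{\T}} \one_n = 0$. Adding $\LD \one_n$ to both sides then produces
\begin{align*}
  \myparen{\LD + L_{11} - L_{12} L_{22}^{-1} L_{12}^{\T}} \one_n
  = \LD \one_n,
\end{align*}
and multiplying on the left by $\myparen{\LD + L_{11} - L_{12} L_{22}^{-1} L_{12}^{\T}}^{-1}$—which exists since that matrix is positive definite, as noted after~\eqref{eq:X-def}—gives $\one_n = X \one_n$, which is exactly the assertion.

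I do not expect any real obstacle here; the argument is a two-line manipulation of $L \one = 0$ together with the Schur-complement structure of $X$. The only points requiring (minor) care are the invertibility of $L_{22}$ and of $\LD + L_{11} - L_{12} L_{22}^{-1} L_{12}^{\T}$, but both have already been established in \Cref{sec:system} and may be invoked directly.
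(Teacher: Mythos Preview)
Your proof is correct and follows essentially the same approach as the paper: both reduce $X\one=\one$ to $(L_{11}-L_{12}L_{22}^{-1}L_{12}^{\T})\one=0$ and derive this from $L\one=0$. You simply make explicit the block-manipulation and the invertibility justifications that the paper leaves as ``some algebraic manipulation.''
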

\begin{proof}
  After some algebraic manipulation, it is clear $X \one = \one$ is equivalent
  to $\myparen{L_{11} - L_{12} L_{22}^{-1} L_{12}^{\T}} \one = 0$, which follows
  from $L \one = 0$.
\end{proof}
\begin{lemma}\label{thm:symmetrical-equivalence}
  Let $A \in \R^{n \times n}$ be a symmetric matrix and
  $i, j \in \{1, 2, \ldots, n\}$ such that $i \neq j$.
  Then $A \in \cS_{ij}$ if and only if $a_{ii} = a_{jj}$ and $a_{ik} = a_{jk}$
  for all $k \neq i, j$.
\end{lemma}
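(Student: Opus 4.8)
The plan is to turn the commutation condition $A\Pi_{ij} = \Pi_{ij}A$ into a pointwise statement about the entries of $A$ and then match cases. First I would use that $\Pi_{ij}$ is a symmetric involution, so $\Pi_{ij}^{\T} = \Pi_{ij}$ and $\Pi_{ij}^{2} = I$; hence $A\Pi_{ij} = \Pi_{ij}A$ is equivalent to $\Pi_{ij}^{\T} A \Pi_{ij} = A$, i.e.\ to the invariance of $A$ under simultaneously swapping rows $i,j$ and columns $i,j$. Writing $\pi$ for the transposition exchanging $i$ and $j$, this invariance reads $a_{\pi(k)\pi(\ell)} = a_{k\ell}$ for all $k,\ell \in \{1,\dots,n\}$.

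Next I would split into cases according to whether $k$ and $\ell$ lie in $\{i,j\}$. If neither does, then $\pi(k)=k$ and $\pi(\ell)=\ell$, so the equation is automatic. If exactly one does, say $k \in \{i,j\}$ and $\ell \notin \{i,j\}$, the equation becomes $a_{i\ell} = a_{j\ell}$, independent of which of $i,j$ equals $k$; the mixed case $k \notin \{i,j\}$, $\ell \in \{i,j\}$ gives $a_{ki}=a_{kj}$, which is the same condition after invoking the symmetry of $A$. If both $k,\ell \in \{i,j\}$, the off-diagonal equations amount to $a_{ij}=a_{ji}$, which hold automatically since $A$ is symmetric, while the diagonal equations reduce to $a_{ii}=a_{jj}$. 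Reading this case analysis in the forward direction shows that $A\in\cS_{ij}$ implies $a_{ii}=a_{jj}$ and $a_{ik}=a_{jk}$ for all $k\neq i,j$; reading it backwards shows that these conditions force $a_{\pi(k)\pi(\ell)} = a_{k\ell}$ in every case, hence $A\in\cS_{ij}$.

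There is no genuinely hard step here: the only thing to watch is the bookkeeping in the case analysis, in particular remembering to use the symmetry of $A$ both to collapse the row-index and column-index versions of the condition $a_{ik}=a_{jk}$ into one and to dispose of the $(i,j)$ and $(j,i)$ entries, so that exactly the stated list of constraints survives.
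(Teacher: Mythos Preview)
Your proof is correct and follows essentially the same approach as the paper: unwind the commutation $A\Pi_{ij}=\Pi_{ij}A$ (equivalently $\Pi_{ij}^{\T}A\Pi_{ij}=A$) into the entrywise conditions $a_{ii}=a_{jj}$, $a_{ij}=a_{ji}$, $a_{ik}=a_{jk}$, $a_{ki}=a_{kj}$ for $k\neq i,j$, and then use the symmetry of $A$ to collapse these to the stated list. The paper's proof does the same in one sentence; your explicit case analysis just fills in the details the paper leaves to the reader.
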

\begin{proof}
  From the definition, it can be seen that $A \in \cS_{ij}$ is equivalent to
  $a_{ii} = a_{jj}$, $a_{ij} = a_{ji}$, $a_{ik} = a_{jk}$, and $a_{ki} = a_{kj}$
  for all $k \neq i, j$.
  Using that $A$ is symmetric, the conditions of the Lemma follow.
\end{proof}
\begin{lemma}\label{thm:symmetrical-properties}
  Let $A, B \in \cS_{ij}$ for some $i, j \in \{1, 2, \ldots, n\}$ such that
  $i \neq j$ and $\alpha, \beta \in \R$.
  Then,
  \begin{enumerate}
  \item $\alpha A + \beta B \in \cS_{ij}$,
  \item $A B \in \cS_{ij}$, and
  \item if $A$ is nonsingular, then $A^{-1} \in \cS_{ij}$.
  \end{enumerate}
\end{lemma}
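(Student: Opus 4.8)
The plan is to observe that $\cS_{ij}$ is, by the defining relation~\eqref{eq:symmat}, exactly the commutant of the single matrix $\Pi_{ij}$ in $\R^{n \times n}$, and then to verify the three standard closure properties of any commutant of a fixed matrix.

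For part (1), I would simply compute $(\alpha A + \beta B) \Pi_{ij} = \alpha A \Pi_{ij} + \beta B \Pi_{ij} = \alpha \Pi_{ij} A + \beta \Pi_{ij} B = \Pi_{ij} (\alpha A + \beta B)$, using $A, B \in \cS_{ij}$ in the middle step. For part (2), I would chain the two commutation relations: $(A B) \Pi_{ij} = A (B \Pi_{ij}) = A (\Pi_{ij} B) = (A \Pi_{ij}) B = (\Pi_{ij} A) B = \Pi_{ij} (A B)$. For part (3), starting from $A \Pi_{ij} = \Pi_{ij} A$ and assuming $A$ invertible, I would left- and right-multiply by $A^{-1}$ to get $\Pi_{ij} A^{-1} = A^{-1} (A \Pi_{ij}) A^{-1} = A^{-1} (\Pi_{ij} A) A^{-1} = A^{-1} \Pi_{ij}$, which is precisely $A^{-1} \in \cS_{ij}$.

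There is no real obstacle here; every step is a one-line manipulation using only that $\Pi_{ij}$ is a fixed matrix and that $A, B$ commute with it. If a more self-contained argument is preferred, one could alternatively phrase all three parts in terms of the conjugation $A \mapsto \Pi_{ij}^{\T} A \Pi_{ij}$ (which equals $A$ on $\cS_{ij}$ since $\Pi_{ij}$ is an involution), noting that this map is linear, multiplicative, and commutes with inversion; but the direct commutant computation above is the cleanest route and is what I would write out.
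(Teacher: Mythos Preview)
Your proposal is correct and is exactly the argument the paper has in mind: the paper's own proof consists of the single sentence ``Follows directly from the definition of $\cS_{ij}$ in~\eqref{eq:symmat},'' and what you have written is precisely that direct verification spelled out. There is nothing to add.
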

\begin{proof}
  Follows directly from the definition of $\cS_{ij}$ in~\eqref{eq:symmat}.
\end{proof}
\begin{lemma}\label{thm:Gamma-symmetrical}
  Let $i, j \in \{1, 2, \ldots, n\}$ be such that $i \neq j$.
  We have $\Gamma \in \cS_{ij}$ if and only if $\LD \in \cS_{ij}$ and
  $L_{11} - L_{12} L_{22}^{-1} L_{12}^{\T} \in \cS_{ij}$.
\end{lemma}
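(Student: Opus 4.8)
The plan is to prove the two implications separately, using only \Cref{thm:symmetrical-properties} (closure of $\cS_{ij}$ under linear combinations, products, and inverses of nonsingular elements) together with \Cref{thm:X1=1}. Throughout, abbreviate $S := L_{11} - L_{12} L_{22}^{-1} L_{12}^{\T}$, so that $X = {(\LD + S)}^{-1} \LD$ and $\Gamma = \LD {(\LD + S)}^{-1} \LD = \LD X$, and note that $\LD$, $\LD + S$, and $\Gamma$ are all nonsingular. The ``if'' direction is immediate: if $\LD \in \cS_{ij}$ and $S \in \cS_{ij}$, then $\LD + S \in \cS_{ij}$, hence ${(\LD + S)}^{-1} \in \cS_{ij}$, hence $\Gamma = \LD \cdot {(\LD + S)}^{-1} \cdot \LD \in \cS_{ij}$, each step by \Cref{thm:symmetrical-properties}.

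For the ``only if'' direction, assume $\Gamma \in \cS_{ij}$. The one non-mechanical step is the observation that $\Gamma \one = \LD X \one = \LD \one = {[\chi_k^{-1}]}_{k \in \cG}$, where the middle equality is \Cref{thm:X1=1}. Since $\Gamma \in \cS_{ij}$ is equivalent to $\Pi_{ij} \Gamma \Pi_{ij} = \Gamma$ and $\Pi_{ij} \one = \one$, we get $\Pi_{ij} (\Gamma \one) = \Pi_{ij} \Gamma \Pi_{ij} \one = \Gamma \one$, i.e.\ $\Gamma \one \in \cX_{ij}$. Comparing the $i$th and $j$th entries of $\Gamma \one = {[\chi_k^{-1}]}_{k \in \cG}$ then gives $\chi_i = \chi_j$; because $\LD$ is diagonal, this is exactly $\LD \in \cS_{ij}$.

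It remains to deduce $S \in \cS_{ij}$, which is now pure bookkeeping. From $\LD \in \cS_{ij}$ and \Cref{thm:symmetrical-properties} we have $\LD^{-1} \in \cS_{ij}$; then ${(\LD + S)}^{-1} = \LD^{-1} \Gamma \LD^{-1} \in \cS_{ij}$ as a product of members of $\cS_{ij}$; inverting gives $\LD + S \in \cS_{ij}$; and subtracting $\LD$ yields $S = (\LD + S) - \LD \in \cS_{ij}$. The only place any genuine idea enters is the identity $\Gamma \one = \LD \one$: it is what lets us peel off the $\LD$-part of the symmetry of $\Gamma$ first, after which the $S$-part follows purely from the algebraic closure properties of $\cS_{ij}$. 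I expect the final write-up to be only a handful of lines.
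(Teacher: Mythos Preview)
Your proof is correct and follows essentially the same route as the paper: the ``if'' direction is immediate from \Cref{thm:symmetrical-properties}, and for the ``only if'' direction the paper also applies the commutation relation to $\one$ and uses $X\one=\one$ (\Cref{thm:X1=1}) to obtain $\LD\one=\Pi_{ij}\LD\one$, hence $\LD\in\cS_{ij}$, after which $S\in\cS_{ij}$ follows from the closure properties. Your write-up simply spells out the final bookkeeping $(\LD+S)^{-1}=\LD^{-1}\Gamma\LD^{-1}$ more explicitly than the paper does.
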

\begin{proof}
  \fbox{$\Leftarrow$} Follows from \Cref{thm:symmetrical-properties}.

  \fbox{$\Rightarrow$} First we show that $\LD \in \cS_{ij}$.
  Using $\Gamma = \LD X$, $\Pi_{ij} \one = \one$, and $X \one = \one$, from
  $\Gamma \Pi_{ij} \one = \Pi_{ij} \Gamma \one$ it follows that
  $\LD \one = \Pi_{ij} \LD \one$.
  Since $\LD$ is a diagonal matrix, from this we see that $\LD \in \cS_{ij}$.
  Now $L_{11} - L_{12} L_{22}^{-1} L_{12}^{\T} \in \cS_{ij}$ follows from
  \Cref{thm:symmetrical-properties}.
\end{proof}

\section*{Acknowledgment}
The work of the first author was supported by a research grant of the
``International Max Planck Research School (IMPRS) for Advanced Methods in
Process and System Engineering (Magdeburg)''.
The work of the second and sixth author was partially supported by JSPS
Grant-in-Aid for Scientific Research (A) 26249062 and JST CREST JPMJCR15K1.

\bibliographystyle{IEEEtran}
\bibliography{root_arxiv}
\end{document}